\newtheorem{theorem}{Theorem}
\newtheorem{lemma}[theorem]{Lemma}								
\newtheorem{proposition}[theorem]{Proposition}	
\newtheorem{corollary}[theorem]{Corollary}	
\newtheorem{assumption}[theorem]{Assumption}	
\newtheorem{definition}[theorem]{Definition}
\newtheorem{example}[theorem]{Example}
\newtheorem{remark}{Remark}
\numberwithin{equation}{section}	
\numberwithin{theorem}{section}
\newcommand{\R}{\mathbb{R}}
\newcommand{\E}{\mathbb{E}}
\renewcommand{\P}{\mathbb{P}}
\newcommand{\Q}{\mathbb{Q}}
\newcommand{\F}{\mathcal{F}}
\DeclareMathOperator*{\esssup}{ess\,sup}
\title{Amortizing Perpetual Options}
\author{Zachary Feinstein\thanks{School of Business, Stevens Institute of Technology, Hoboken, NJ 07030, USA, {\tt  zfeinste@stevens.edu}. Portions of this work are the subject of a provisional U.S.\ patent application filed by the author.}}
\date{\today}
\begin{document}

\maketitle
\begin{abstract}
In this work, we introduce \emph{amortizing perpetual options} (AmPOs), a fungible variant of continuous-installment options suitable for exchange-based trading. Traditional installment options lapse when holders cease their payments, destroying fungibility across units of notional. AmPOs replace explicit installment payments and the need for lapsing logic with an implicit payment scheme via the decay of the claimable notional. This amortization ensures all units evolve identically, preserving fungibility. We demonstrate that AmPO valuation can be reduced to an equivalent vanilla perpetual American option on a dividend-paying asset. This enables analytical expressions for the exercise boundaries and risk-neutral valuations for calls and puts. These formulas and relations allow us to derive the Greeks and study comparative statics with respect to the amortization rate. Illustrative numerical case studies demonstrate how the amortization rate shapes option behavior and reveal the resulting tradeoffs in the effective volatility sensitivity.
\end{abstract}

\section{Introduction}\label{sec:intro}

Perpetual contracts are experiencing a renaissance due to their adoption in decentralized finance and cryptocurrency markets. Though introduced in~\cite{shiller1993measuring} for, e.g., a market on real estate indices, perpetual futures on cryptocurrencies are actively traded across a number of markets such as \emph{Hyperliquid} and \emph{Binance}. These contracts readily allow traders to purchase leveraged positions or to short assets which would otherwise be technologically incapable of such a position; we refer the interested reader to, e.g.,~\cite{ackerer2023perpetual,angeris2023primer}. However, perpetual futures only provide users with linear payoffs. Perpetual options, which replicate elements of the funding rate structure of perpetual futures, have recently been introduced in decentralized systems. We highlight everlasting options~\cite{everlasting,ackerer2023perpetual} and Panoptic~\cite{panoptic} as specific perpetual option variants.

However, though perpetual futures have found wide usage in cryptocurrencies, perpetual options have \emph{not} gained the same market share. Notably, the stochastic funding rates used in everlasting options and at Panoptic can be viewed as types of installment options. An installment option is one in which cash flows are exchanged between the long and short positions~\cite{davis2002installment,ben2006dynamic}. Traditionally, these cash flows exist solely from the long to the short holders. Such options, however, only trade over-the-counter due to the nature of these contracts~\cite{ben2006dynamic}. 
Specifically, installment options grant the holder a dual optionality: exercise or lapse. Because the lapsing logic comes from failure to make the installment payment in full, the notional units of these contracts are \emph{not} identical and, therefore, cannot trade on an exchange.
For the purposes of this work, we specifically highlight continuous-installment (CI) options which require a continuous stream of payments from the option holder to the option underwriter in order to keep the contract alive~\cite{ciurlia2005valuation,yang2009variational,kimura2009american,kimura2010valuing,ciurlia2011integral}.

The remainder of this paper is organized as follows. 
In Section~\ref{sec:ampo}, we introduce an \textbf{exchange-tradable variant of CI options}; this is accomplished by replacing the explicit installment payments with implicit amortization of the option's notional. This amortizing contract design eliminates the lapsing logic of CI options, thereby constructing a fungible option with a positive cost-of-carry; this structural reformulation is the primary contribution of this work.
In Section~\ref{sec:bs}, we study the \textbf{risk-neutral valuation} of these novel amortizing options by determining an equivalence with vanilla perpetual American options for a dividend-paying asset. In Section~\ref{sec:bs-const}, under the Black-Scholes framework, we formulate the \textbf{Greeks} and also consider \textbf{sensitivity of the option behavior to the contractual amortization rate}.
Section~\ref{sec:discussion-amort} provides considerations for how to select the amortization rate in practice.
Section~\ref{sec:discussion-apps} then presents a brief discussion of applicability of this new option type in both traditional and decentralized finance settings. 

\section{Amortizing Options}\label{sec:ampo}

Throughout this work, we fix a filtered probability space $(\Omega,\mathcal{F},(\mathcal{F}_t)_{t \geq 0},\P)$ satisfying the usual conditions. All stochastic processes introduced hereafter are assumed to be adapted to this filtration unless otherwise stated.

As highlighted in Section~\ref{sec:intro}, CI options present a novel approach to perpetual contracts which enforce an explicit time-cost on the option holder~\cite{ciurlia2009note,kimura2009american,kimura2010valuing}. For such options, the buyer pays an initial premium $V_0 > 0$ to the underwriter for a claimable notional of $N_0 > 0$ upon exercise. However, in contrast to traditional options, the holder also needs to pay continuous installments $c_t dt$ per unit of notional to the underwriter to keep the contract alive. In this way, the holder has two forms of optionality:
\begin{enumerate}
\item the right to exercise, and
\item the right to lapse by halting the installment payments.
\end{enumerate}
Installment options have been offered over-the-counter in a few markets: ``installment warrants on Australian stocks listed on the Australian Stock Exchange (ASX) \cite{franccois2005pricing,ben2006dynamic}, a 10-year warrant with nine annual payments offered by Deutsche Bank \cite{davis2001pricing}, and so on,'' ~\cite[Section 1]{kimura2009american}. However, because lapsing destroys fungibility across units of notional, i.e., partial lapsing by making a partial installment payment is \emph{not} permissible, traditional CI options are not amenable to exchange-based trading.

\begin{example}\label{ex:CI}
Traditionally, as described in~\cite{kimura2009american}, CI options specify installments as either: (1) constant over time $c_t \equiv c > 0$, or (2) proportional to the underlying asset price $(S_t)_{t \geq 0}$, i.e., $c_t \equiv \gamma S_t$ with $\gamma > 0$.
\end{example}

Herein, we propose an exchange-tradable variant of CI options to maintain a positive cost-of-carry; as far as the author is aware, this is the first perpetual option design that is fungible with a positive cost-of-carry. Instead of directly paying the continuous installments, the option holder implicitly sells a fraction of his or her option holdings back to the underwriter at the current premium. 
Formally, if $N_t > 0$ denotes the time $t$ claimable notional and $V_t > 0$ denotes the option's premium per unit of notional, then funding the installment cost $c_t N_t dt$ with a sale of the notional leads to an exponential decay of the notional:
\[dN_t = -q_t N_t dt, \quad q_t = c_t / V_t.\]
Notably, because the notional decreases due to these implicit payments, the requisite installment payments are also decreasing over time. In contrast to CI options in which the option underwriter is paid continuously in time, this option variant ``pays'' the underwriter via the decrease in his or her risk exposure.
We note that the amortization rate $q_t$, if time $t$ dependent, is based on a global time-index and not based on the contract age; this construction is essential to maintain fungibility between contracts.

\begin{definition}\label{defn:ampo}
An \textbf{amortizing perpetual option (AmPO)} is an implicit payment, perpetual American CI option with installment costs $c_t = q_{t} V_t$ for some progressively measurable amortization rate process $(q_t)_{t \geq 0}$ such that $q_t \geq 0$ a.s.\ and $\int_0^t q_s ds < \infty$ a.s.\ for every time $t \geq 0$. 
That is, for an AmPO with payoff function $\Phi: \R_+ \to \R_+$ on underlying $(S_t)_{t \geq 0}$ purchased at time $t_0$, the realized exercise value is $e^{-\int_{t_0}^\tau q_s ds} \Phi(S_\tau)$ with exercise time $\tau$.
\end{definition}

\begin{proposition}\label{prop:no-cancel}
Consider an AmPO with amortization rate $(q_t)_{t \geq 0}$ and payoff function $\Phi: \R_+ \to \R_+$ on some underlying price process $(S_t)_{t \geq 0}$. A rational option holder, who purchased this AmPO at time $t_0 \geq 0$, will never cancel the option position.
\end{proposition}
\begin{proof}
Since the AmPO requires no explicit installment payments from the holder to maintain the position, the only cost-of-carry is the decay of the claimable notional. Note that this cost is internal to the valuation. Since the payoff function $\Phi(S) \geq 0$ for any price $S \geq 0$, the realized payout $e^{-\int_{t_0}^\tau q_s ds} \Phi(S_\tau) \geq 0$ a.s.\ for any stopping time $\tau \geq t_0$. As this holds for any exercise time $\tau$, the value of the AmPO must also be non-negative. As the value of the cancelled AmPO is uniformly 0, the result follows.
\end{proof}

\begin{remark}
Proposition~\ref{prop:no-cancel} highlights a critical distinction between AmPOs and traditional CI options. In a traditional CI option, if the installment rate $c_t$ becomes large relative to the option value $V_t$ (e.g., as the option moves out-of-the-money), the option holder terminates the contract to stop the negative carry costs. 
In contrast, because AmPOs are effectively non-cancellable (as it is never optimal to cancel), as the option value $V_t$ drops, the effective amortization rate $q_t = c_t/V_t$ can explode (e.g., with the constant installment rate $c_t \equiv c > 0$ as highlighted in Example~\ref{ex:CI}). This jump in the amortization rate causes an accelerated liquidation of the notional. This creates an endogenous \emph{approximate} lapsing of the option as the claimable notional decays to some minimal size. This effect is investigated further in Section~\ref{sec:discussion-amort}.
\end{remark}

Recall that AmPOs are amortizing options, meaning that the notional claim decays along the process $N_t = e^{-\int_{t_0}^t q_s ds} N_{t_0}$ for any time $t \geq t_0$. In this way, these perpetual options can be exercised at any stopping time $\tau$ resulting in a payoff of $e^{-\int_{t_0}^\tau q_s ds} \Phi(S_\tau)$ for payoff function $\Phi: \R_+ \to \R_+$ per unit of notional, assuming one unit of notional is purchased at time $t_0$.
However, option holders often want to maintain a constant notional exposure. To do so, the implicit installments paid through amortization become realized explicit installments paid to the market. That is, the option holder would pay $q_t V_t dt$ per unit of notional exposure to maintain the desired position. Because AmPOs are fungible, these payments are \emph{not} necessarily made to the original underwriter who is compensated by the amortization. In Section~\ref{sec:bs-general}, we will demonstrate that the valuation of these two viewpoints (decaying exposure with implicit payments or constant exposure with explicit option purchases) are equivalent problems so long as the amortization rate is an exogenous process.

\begin{remark}
Step options~\cite{linetsky1999step,detemple2020american} similarly feature payoffs with an effective notional that can decay over time. However, in contrast to the amortization schedule of AmPOs, the decay in step options depends on the occupation time of the underlying, thus requiring path-dependent valuation and eliminating the fungibility required for exchange-based trading.
\end{remark}

\section{Risk-Neutral Valuation}\label{sec:bs}
Within this section, we want to investigate the risk-neutral valuation of these AmPOs as presented in the prior section. 
Throughout this section, we will consider a market with risk-free rate $(r_t)_{t \geq 0}$ and risky price process $(S_t)_{t \geq 0}$. We further assume throughout this section that this market admits no arbitrage and, thus, there exists a risk-neutral measure $\Q \sim \P$ by the fundamental theorem of asset pricing. Within this system, we study the risk-neutral valuation of a single claimable unit of notional of an AmPO with payoff function $\Phi: \R_+ \to \R_+$. 
In this way, throughout this section, we consider the time $t \geq 0$ optimal exercise problem
\begin{equation}\label{eq:exercise}
V_t = \esssup_{\tau \in \mathcal{T} , \tau \geq t} \E^\Q\left[ e^{-\int_t^{\tau} (r_s + q_s) ds} \Phi(S_\tau) \, \big| \, \mathcal{F}_t\right]
\end{equation}
where $\mathcal{T}$ denotes the set of all bounded stopping times with respect to the market filtration $(\mathcal{F}_t)_{t \geq 0}$. For the remainder of this work, we denote the (per unit of notional) process $(V_t)_{t \geq 0}$ as the AmPO value process.

\begin{assumption}\label{ass:rq}
For the remainder of this work, we assume the interest rate $(r_t)_{t \geq 0}$ and amortization $(q_t)_{t \geq 0}$ processes are progressively measurable such that $r_t,q_t \geq 0$ a.s.\ and $\int_0^t r_s ds, \int_0^t q_s ds < \infty$ a.s.\ for every time $t \geq 0$. 
In addition, we consider the payoff function $\Phi: \R_+ \to \R_+$ such that $e^{-\int_0^t (r_s + q_s) ds} \Phi(S_t) \in L^1$ for every $t \geq 0$.\footnote{As taken in~\eqref{eq:exercise}, throughout this work we consider the space of \emph{bounded} stopping times $\mathcal{T}$. If $\sup_{t \geq 0} e^{-\int_0^t (r_s + q_s)ds} \Phi(S_t) \in L^1$ then, e.g., the value $V_t$ of~\eqref{eq:exercise} is identical to the problem taking the supremum over all stopping times $\mathcal{S}$, as can be shown via standard truncation arguments and the dominated convergence theorem.}
\end{assumption}

In studying this valuation problem, we consider two settings.
First, in Section~\ref{sec:bs-general}, we consider the general valuation problem by determining an isomorphism between AmPOs and perpetual American options on dividend-paying assets.
Then, in Section~\ref{sec:bs-const}, we specifically investigate the Black-Scholes framework for AmPOs with constant amortization rates $q_t \equiv q > 0$ to provide intuition on the pricing and hedging of these instruments.

\subsection{General Setting}\label{sec:bs-general}
To determine the risk-neutral value of an AmPO, we first observe a fundamental symmetry in the value process. Holding an AmPO involves a decaying notional $N_t = e^{-\int_0^t q_s ds}N_0$ on the option payoff $\Phi(\cdot)$. That is, from the holder's perspective, the position's value is $N_t V_t$ at time $t$ for AmPO option value $(V_t)_{t \geq 0}$ with payoff $N_\tau \Phi(S_\tau)$ at exercise time $\tau$. The amortization immediately shows up as an additional discounting term (i.e., an updated effective risk-free rate $(r_t+q_t)_{t \geq 0}$); with this augmented risk-free rate, the amortization also appears as a drag on the modified drift to the underlying asset (i.e., an effective dividend yield augmented by $(q_t)_{t \geq 0}$). In this way, in Lemma~\ref{lemma:ampo}, we demonstrate that AmPO valuation perfectly coincides with (vanilla) perpetual American options under modified parameters. 
\begin{lemma}\label{lemma:ampo}
Let the risky asset price process be given by $(S_t)_{t \geq 0}$ with dividend rate $(\delta_t)_{t \geq 0}$.
The AmPO with amortization rate $(q_t)_{t \geq 0}$ is priced (per unit of notional) identically to the perpetual American option with the same payoff, effective risk-free rate $(r_t+q_t)_{t \geq 0}$, and dividend yield $(\delta_t+q_t)_{t \geq 0}$.
\end{lemma}
\begin{proof}
Consider the money-market account $B_t := \exp(\int_0^t r_s ds)$ for any time $t$ and assume the risky asset pays a (possibly time-varying) dividend yield $(\delta_t)_{t \geq 0}$ in the sense that the discounted, dividend-reinvested price process is a $\Q$-local martingale, i.e., $(B_t^{-1}\exp(\int_0^t \delta_s ds) S_t)_{t \geq 0}$ is a $\Q$-local martingale.
Furthermore, define the effective money-market account $\tilde{B}_t := \exp(\int_0^t (r_s + q_s) ds)$. Immediately, as $\tilde{B}_t^{-1} \exp(\int_0^t (\delta_s + q_s) ds) S_t = B_t^{-1} \exp(\int_0^t \delta_s ds) S_t$, we note that $(S_t)_{t \geq 0}$ with effective dividend yield $(\delta_t + q_t)_{t \geq 0}$ is a $\Q$-local martingale with discounting by $(\tilde{B}_t)_{t \geq 0}$.
Now consider the AmPO value at time $t = 0$ provided in~\eqref{eq:exercise}, i.e., $V_0 = \sup_{\tau \in \mathcal{T}} \E^\Q[\tilde{B}_\tau^{-1}\Phi(S_\tau)]$. By inspection, the stopping problem defining $V_0$ is exactly the perpetual American option valuation problem in a market with risk-free rate $(r_t + q_t)_{t \geq 0}$ and dividend yield $(\delta_t + q_t)_{t \geq 0}$.
\end{proof}

Lemma~\ref{lemma:ampo} provides a direct methodology to study the risk-neutral valuation of AmPOs in general settings. By solving the equivalent perpetual American option valuation problem, the valuation and exercise boundary for AmPOs can be determined. We refer the interested reader to, e.g., \cite{boyarchenko2000option,boyarchenko2002perpetual} for theory on pricing these options under general L\'{e}vy processes and to~\cite{al2024perpetual} for a methodology for such problems under jump-diffusion processes.
In Section~\ref{sec:bs-const} below, we will detail pricing and hedging in the Black-Scholes framework under a constant amortization rate $q_t \equiv q > 0$ to provide a simple setting with closed form solutions.

Though we find that AmPOs can be priced as perpetual American options on dividend-paying assets with modified market parameters, we note that these still function as CI variants. In Proposition~\ref{prop:ampo-CI}, we verify that the AmPO valuation problem is identical to that of the CI option with $c_t = q_t V_t$ (assuming constant claimable notional) under regularity conditions and if the amortization rate is exogenously defined from the option value; while not required for the AmPO valuation results, it provides sufficient conditions under which the self-referential CI option coincides with the intuition of Section~\ref{sec:ampo}. Such results are akin to those found in pricing credit risk~\cite{duffie1999modeling}.
\begin{proposition}\label{prop:ampo-CI}
Fix an exogenous amortization rate process $(q_t)_{t \geq 0}$.
Define the operator $\mathcal{C}$ on $(U_t)_{t \geq 0} \in \mathbb{V} := \left\{(U_t)_{t \geq 0} \geq 0 \text{ adapted c\`adl\`ag} \; \bigg| \; \left(e^{-\int_0^s r_u du} U_s\right)_{s \in [0,t]} \text{ is of class $D$}, \; \int_0^t e^{-\int_0^s r_u du} q_s U_s ds \in L^1 \, \forall t \geq 0\right\}$ by 
\[\mathcal{C}(U)_t := \esssup_{\tau \in \mathcal{T}, \tau \geq t} \E^\Q\left[e^{-\int_t^\tau r_s ds} \Phi(S_\tau) - \int_t^\tau e^{-\int_t^s r_u du} q_s U_s ds \; \bigg| \; \mathcal{F}_t\right].\footnote{Because immediate exercise is admissible, $\mathcal{C}(U)_t \geq \Phi(S_t) \geq 0$. Thus adding an outside option to lapse (i.e., replacing $\mathcal{C}(U)_t$ by $\mathcal{C}(U)_t \vee 0$) would not change the operator.}\]
Assume the per unit of notional AmPO value process $(V_t)_{t \geq 0} \in \mathbb{V}$.
\begin{enumerate}
\item\label{prop:ampo-CI-1} If $\mathcal{C}$ admits a fixed point $U \in \mathbb{V}$ then $U = V$ (i.e., if $V = \mathcal{C}(V)$ then it is the unique fixed point);
\item\label{prop:ampo-CI-2} if $(S_t)_{t \geq 0}$ has continuous paths, $\Phi$ is a continuous mapping, $\lim_{t \to \infty} e^{-\int_0^t r_s ds} \Phi(S_t) = 0$ a.s.,\linebreak $\E^\Q[\sup_{t \geq 0} e^{-\int_0^t r_s ds} \Phi(S_t)] < \infty$ and $V \in \mathbb{V}^* := \{U \in \mathbb{V} \; | \; \int_0^\infty e^{-\int_0^s r_u du} q_s U_s ds \in L^1(\Q)\}$ then $V = \mathcal{C}(V)$.
\end{enumerate}
\end{proposition}
\begin{proof}
For clarity we will break this proof into three parts. First, we will show that the per unit of notional value process $V \geq \mathcal{C}(V)$. Second, given this result, we will show that $U \in \mathbb{V}$ such that $U = \mathcal{C}(U)$ must coincide with $V$. Finally, under the added condition that $V \in \mathbb{V}^*$, we will show that $V \leq \mathcal{C}(V)$ which, combined with the first part of the proof, will guarantee that $V = \mathcal{C}(V)$. For these purposes, define the conditional payoff mapping $\bar{\mathcal{C}}^\tau$ for a stopping time $\tau \in \mathcal{T}$ such that 
\[\bar{\mathcal{C}}^\tau(U)_t := \E^\Q\left[e^{-\int_{t\wedge\tau}^\tau r_s ds} \Phi(S_\tau) - \int_{t\wedge\tau}^\tau e^{-\int_{t\wedge\tau}^s r_u du} q_s U_s ds \; \big| \; \mathcal{F}_{t\wedge\tau}\right]\] 
for any $U \in \mathbb{V}$. 
Define the discount factor $D_t := \exp(-\int_0^t r_s ds)$ and the cumulative amortization factor $Q_t := \exp(-\int_0^t q_s ds)$. From the definition of the American option value in Lemma~\ref{lemma:ampo}, the amortized, discounted value process $M_t := D_t Q_t V_t$ is the Snell envelope of $(D_t Q_t \Phi(S_t))_{t \geq 0}$ and, thus, is a $\Q$-supermartingale for all $t$. 
\begin{enumerate}
\item Consider $\bar{\mathcal{C}}^\tau(V)_t$ for an arbitrary stopping time $\tau \in \mathcal{T}$ (assuming $\tau \geq t$ a.s.\ without loss of generality):
\begin{align*}
\bar{\mathcal{C}}^\tau&(V)_t = \frac{1}{D_t} \E^\Q\left[D_\tau \Phi(S_\tau) - \int_t^\tau D_s q_s V_s ds \, \bigg| \, \mathcal{F}_t\right] 
 = \frac{1}{D_t} \E^\Q\left[D_\tau \Phi(S_\tau) - \int_t^\tau q_s Q_s^{-1} M_s ds \, \bigg| \, \mathcal{F}_t \right] \\ 
 &= \frac{1}{D_t} \E^\Q\left[D_\tau \Phi(S_\tau) - \int_t^\tau M_s d(Q_s^{-1}) \, \bigg| \, \mathcal{F}_t \right] 
 = \frac{1}{D_t} \E^\Q\left[D_\tau \Phi(S_\tau) - Q_\tau^{-1} M_{\tau} + Q_t^{-1} M_t + \int_t^\tau Q_s^{-1} dM_s \, \bigg| \, \mathcal{F}_t\right].
\end{align*}
Recall that $Q_\tau^{-1} M_\tau = D_\tau V_\tau$ and $Q_t^{-1} M_t = D_t V_t$. Substituting these back into the expectation:
\[
\bar{\mathcal{C}}^\tau(V)_t = V_t + \frac{1}{D_t} \underbrace{\E^\Q\left[D_\tau (\Phi(S_\tau) - V_\tau) \, \bigg| \, \mathcal{F}_t\right]}_{\leq 0} + \frac{1}{D_t} \underbrace{\E^\Q\left[\int_t^\tau Q_s^{-1} dM_s \, \bigg| \, \mathcal{F}_t\right]}_{\leq 0} \leq V_t.
\]
The first term is non-positive because the American option value always exceeds the immediate payoff ($V_\tau \geq \Phi(S_\tau)$). 
To prove that the second term is non-positive, we first note that $X_T := \int_t^T Q_s^{-1} dM_s$ (for $T \geq t$) is a $\Q$-local supermartingale (as $(M_s)_{s \geq t}$ is a $\Q$-supermartingale and $(Q_s^{-1})_{s \geq t}$ is positive and of finite variation). Let $(\sigma_k)_{k \in \mathbb{N}}$ be a localizing sequence of stopping times such that $(X_T^k := X_{T \wedge \sigma_k})_{T \geq t}$ are $\Q$-supermartingales for any $k$. Furthermore, because $Q_T^{-1}M_T \geq 0$ for any $T \geq 0$, we note that $X_\tau^k = Q_{\tau\wedge\sigma_k}^{-1} M_{\tau\wedge\sigma_k} - Q_t^{-1} M_t - \int_t^{\tau\wedge\sigma_k} D_s q_s V_s ds \geq -Q_t^{-1} M_t - \int_t^\tau D_s q_s V_s ds \in L^1$ by the assumed integrability conditions. Therefore, applying Fatou's lemma, we find:
\[\E^\Q\left[\int_t^\tau Q_s^{-1} dM_s \, \bigg| \, \mathcal{F}_t\right] \leq \liminf_{k \to \infty} \E^\Q\left[X_\tau^k \, | \, \mathcal{F}_t\right] \leq \liminf_{k \to \infty} X_t^k = 0.\]
\item Let $U \in \mathbb{V}$ be a fixed point of $\mathcal{C}$.
Since immediate exercise is admissible, $U_t = \mathcal{C}(U)_t \geq \Phi(S_t)$ for any $t \geq 0$.
We now show that $U_t \geq V_t$ for all $t \geq 0$. Taking advantage of $U = \mathcal{C}(U)$ and the dynamic programming principle, $N_t := D_t U_t - \int_0^t D_s q_s U_s ds$ defines a $\Q$-supermartingale. Moreover, $D_t Q_t U_t = U_0 + \int_0^t Q_s dN_s$ for all times $t \geq 0$. Since $(Q_t)_{t \geq 0}$ is positive, predictable, and of finite variation, the right-hand side is a $\Q$-local supermartingale; since the left-hand side is non-negative, Fatou's lemma implies $(D_t Q_t U_t)_{t \geq 0}$ is a $\mathbb{Q}$-supermartingale.
Therefore, because $U_t \geq \Phi(S_t)$, this supermartingale dominates the discounted payoff process, i.e., $D_t Q_t U_t \geq D_t Q_t \Phi(S_t)$ for any time $t \geq 0$. In particular, for any time $t \geq 0$, 
\[D_t Q_t U_t \geq \esssup_{\tau \in \mathcal{T}, \tau \geq t} \E^\Q[D_\tau Q_\tau U_\tau \, | \, \mathcal{F}_t] \geq \esssup_{\tau \in \mathcal{T}, \tau \geq t} \E^\Q[D_\tau Q_\tau \Phi(S_\tau) \, | \, \mathcal{F}_t] = D_t Q_t V_t.\]
Finally, following step 1 of this proof and because $\mathcal{C}$ is order reversing (by inspection), $U_t = \mathcal{C}(U)_t \leq \mathcal{C}(V)_t \leq V_t$ for all times $t \geq 0$. Therefore, combining with the proven $U_t \geq V_t$, it must follow that $U_t = V_t$ at all times $t \geq 0$.
\item Under the imposed continuity and integrability conditions, by \cite[Theorem D.12]{karatzas1998methods}, there exists an optimal stopping time $\tau_t^* \geq t$ such that $(M_{s \wedge \tau_t^*})_{s \geq t}$ is a $\Q$-martingale and $V_{\tau_t^*} = \Phi(S_{\tau_t^*})$ on $\{\tau_t^* < \infty\}$.
Truncating this stopping time $\tau_t^T := \tau_t^* \wedge T$ at $T \geq t$, we recover $\mathcal{C}(V)_t \geq \bar{\mathcal{C}}^{\tau_t^T}(V)_t$. In particular, by the assumed integrability conditions, the dominated convergence theorem implies $\mathcal{C}(V)_t \geq \frac{1}{D_t} \E^\Q\left[D_{\tau_t^*} \Phi(S_{\tau_t^*}) - \int_t^{\tau_t^*} D_s q_s V_s ds \; \bigg| \; \mathcal{F}_t\right]$ where $D_{\tau_t^*} \Phi(S_{\tau_t^*}) = 0$ on $\{\tau_t^* = \infty\}$. Furthermore, using the martingale property of $(M_{s \wedge \tau_t^*})_{s \geq t}$, by Tonelli's theorem and the tower property,
\begin{align*}
\E^\Q\left[\int_t^{\tau_t^*} D_s q_s V_s ds \; \bigg| \; \mathcal{F}_t\right] &= \E^\Q\left[D_{\tau_t^*} \Phi(S_{\tau_t^*}) \int_t^{\tau_t^*} q_s \frac{Q_{\tau_t^*}}{Q_s} ds \; \bigg| \; \mathcal{F}_t\right] 
= \E^\Q\left[D_{\tau_t^*} \Phi(S_{\tau_t^*}) \left(1 - \frac{Q_{\tau_t^*}}{Q_t}\right) \; \bigg| \; \mathcal{F}_t\right].
\end{align*}
Therefore, $\mathcal{C}(V)_t \geq  \E^\Q\left[\frac{D_{\tau_t^*}Q_{\tau_t^*}}{D_t Q_t}\Phi(S_{\tau_t^*}) \; \bigg| \; \mathcal{F}_t\right] = \E^\Q\left[e^{-\int_t^{\tau_t^*} (r_s + q_s) ds} \Phi(S_{\tau_t^*}) \; \bigg| \; \mathcal{F}_t\right] = V_t$.
\end{enumerate}
\end{proof}

Notably, the requirement in Proposition~\ref{prop:ampo-CI} of an exogenously defined amortization rate process $(q_t)_{t \geq 0}$ (i.e., one that is not a direct function of $V$) is not satisfied in standard CI option constructions (see Example~\ref{ex:CI}). 
In the following section, we will consider the specific case of a constant amortization rate $q_t \equiv q > 0$; this simplifies the setup for explicit valuation and satisfies the regularity conditions imposed in Proposition~\ref{prop:ampo-CI}\eqref{prop:ampo-CI-1}. While Proposition~\ref{prop:ampo-CI}\eqref{prop:ampo-CI-2} gives one sufficient condition for $V = \mathcal{C}(V)$, this condition fails for call options in the Black-Scholes setting. Therefore, under that simplified setting, Corollary~\ref{prop:bs-CI} directly establishes the fixed-point relation.

\subsection{Constant Amortization in the Black-Scholes Framework}\label{sec:bs-const}

\begin{table}[!ht]
\centering
\begin{tabular}{|l||c|c|}
\hline
 & \textbf{Call Option} & \textbf{Put Option} \\ \hline\hline
\textbf{Price} & $\frac{K}{\alpha_C-1}\left(\frac{(\alpha_C-1)S_0}{\alpha_C K}\right)^{\alpha_C}$ & $\frac{K}{1+\alpha_P}\left(\frac{\alpha_P K}{(1+\alpha_P)S_0}\right)^{\alpha_P}$ \\ \hline
\textbf{Delta} & $\left(\frac{(\alpha_C - 1) S_0}{\alpha_C K}\right)^{\alpha_C - 1}$ & $-\left(\frac{\alpha_P K}{(1+\alpha_P)S_0}\right)^{1+\alpha_P}$ \\ \hline
\textbf{Gamma} & $\frac{(\alpha_C-1)^2}{\alpha_C K} \left(\frac{(\alpha_C - 1) S_0}{\alpha_C K}\right)^{\alpha_C - 2}$ & $\frac{\alpha_P K}{S_0^2} \left(\frac{\alpha_P K}{(1+\alpha_P) S_0}\right)^{\alpha_P}$ \\ \hline
\textbf{Theta} & 0 & 0 \\ \hline
\textbf{Vega} & $\frac{4C_0}{\sigma}\log\left(\frac{(\alpha_C-1)S_0}{\alpha_C K}\right)\left(\frac{(\alpha_C-1)r-q}{(2\alpha_C-1)\sigma^2 + 2r}\right)$ & $\frac{4P_0}{\sigma}\log\left(\frac{(1+\alpha_P)S_0}{\alpha_P K}\right)\left(\frac{(1+\alpha_P)r+q}{(2\alpha_P+1)\sigma^2-2r}\right)$ \\ \hline
\end{tabular}
\caption{Summary table of AmPO risk-neutral valuation and Greeks for a constant notional exposure in the Black-Scholes framework under constant amortization with $S_0 \leq \bar{S}_C$ and $S_0 \geq \bar{S}_P$ for calls and puts respectively.
} \label{tab:greeks}
\end{table}

Within this section, we want to investigate the risk-neutral valuation of these AmPOs in the Black-Scholes framework under constant amortization. To do so, we follow Assumptions~\ref{ass:gbm} throughout this subsection.
\begin{assumption}\label{ass:gbm}
Consider a complete market with constant risk-free rate $r \geq 0$ and risky asset with price process $(S_t)_{t \geq 0}$.
Let $\Q \sim \P$ denote the unique risk-neutral measure for this market and consider the $\Q$-Brownian motion (w.r.t.\ $(\F_t)_{t \geq 0}$) $(W_t)_{t \geq 0}$.
For the remainder of this work, we will assume that $(S_t)_{t \geq 0}$ follows a geometric Brownian motion under $\Q$:
\[dS_t = S_t (r dt + \sigma dW_t)\]
with initial price $S_0 > 0$ and volatility $\sigma > 0$.
Furthermore, consider AmPOs with constant amortization rates $q_t \equiv q > 0$.
\end{assumption}

\subsubsection{Pricing and Hedging}\label{sec:bs-const-price}

Following Lemma~\ref{lemma:ampo}, the pricing of call and put AmPOs trivially follows from the results of, e.g., \cite[Chapter 26.2]{hull}. For this reason, we omit the proof for the pricing of call and put AmPOs.
\begin{corollary}[Call Option Pricing]\label{cor:call}
Consider a call AmPO with strike $K > 0$ and amortization rate $q > 0$. 
The optimal exercise boundary is $\bar S_C = \frac{\alpha_C K}{\alpha_C-1}$ for $\alpha_C = \sqrt{\left(\frac{r}{\sigma^2}+\frac{1}{2}\right)^2 + \frac{2q}{\sigma^2}} - \frac{r}{\sigma^2} + \frac{1}{2} > 1$. Under optimal execution, with $S_0 \leq \bar S_C$, the premium for this option is
\[C_0 = \frac{K}{\alpha_C-1}\left(\frac{(\alpha_C - 1)S_0}{\alpha_C K}\right)^{\alpha_C}.\]
\end{corollary}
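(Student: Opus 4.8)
The plan is to invoke Lemma~\ref{lemma:ampo} to replace the call AmPO by a vanilla perpetual American call written on an asset carrying dividend yield $q$ under effective risk-free rate $r+q$, and then run the classical free-boundary analysis for that option. On the continuation region the value $V$ satisfies the Cauchy--Euler ODE already derived in the proof of Lemma~\ref{lemma:ampo},
\[
\frac{1}{2}\sigma^2 S^2 V''(S) + r S V'(S) - (r+q) V(S) = 0,
\]
whose general solution is $V(S) = A S^{\alpha_C} + B S^{\alpha_-}$, where $\alpha_C > \alpha_-$ are the two roots of the indicial equation $\frac{1}{2}\sigma^2 \alpha(\alpha-1) + r\alpha - (r+q) = 0$; solving this quadratic produces the larger root $\alpha_C$ in the closed form given in the statement (together with a negative companion root $\alpha_-$). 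Since $\alpha_- < 0$, the term $B S^{\alpha_-}$ blows up as $S \downarrow 0$ whereas a call's value must remain bounded — in fact vanish — there, which forces $B = 0$.

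Next I would pin down the exercise boundary $\bar S_C$ and the constant $A$ via the American-option matching conditions at the free boundary: value matching $A \bar S_C^{\alpha_C} = \bar S_C - K$ and smooth pasting $\alpha_C A \bar S_C^{\alpha_C - 1} = 1$. Dividing these two relations eliminates $A$ and yields $\bar S_C/\alpha_C = \bar S_C - K$, i.e.\ $\bar S_C = \alpha_C K/(\alpha_C - 1)$, which is positive precisely because $\alpha_C > 1$. Back-substituting gives $A = \frac{1}{\alpha_C}\bar S_C^{\,1-\alpha_C}$, whence $C_0 = A S_0^{\alpha_C}$ simplifies — writing $\bar S_C^{\,1-\alpha_C} = \bar S_C \cdot \bar S_C^{-\alpha_C}$ and substituting $\bar S_C = \alpha_C K/(\alpha_C-1)$ — to $\frac{K}{\alpha_C - 1}\bigl(\frac{(\alpha_C-1)S_0}{\alpha_C K}\bigr)^{\alpha_C}$, as claimed. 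The strict inequality $\alpha_C > 1$ is itself immediate: the indicial polynomial $f(\alpha) := \frac{1}{2}\sigma^2 \alpha(\alpha-1) + r\alpha - (r+q)$ satisfies $f(1) = -q < 0$ and $f(\alpha) \to +\infty$, so its larger root lies strictly to the right of $1$ (this is exactly where the assumption $q > 0$, i.e.\ a strictly positive effective dividend, enters).

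The only delicate step is the verification that this explicit candidate is genuinely the price of the American option rather than merely a solution of the ODE with an ad hoc boundary: one must check that $V$ dominates the intrinsic value $(S-K)^+$ on $(0,\bar S_C)$, coincides with it on $[\bar S_C,\infty)$, and that $\frac{1}{2}\sigma^2 S^2 V'' + r S V' - (r+q)V \le 0$ on the stopping region, so that stopping at $\bar S_C$ is optimal and no alternative boundary or branch improves on it. This is precisely the standard verification argument for perpetual American calls on dividend-paying assets, which can be quoted from \cite[Chapter 26.2]{hull}; granted Lemma~\ref{lemma:ampo}, everything else above is routine algebra, which is why the paper is content to omit the detailed proof.
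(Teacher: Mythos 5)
Your approach is exactly the one the paper intends (and omits): reduce via Lemma~\ref{lemma:ampo} to a perpetual American call on a dividend-paying asset and run the standard Cauchy--Euler / smooth-pasting derivation from \cite[Chapter 26.2]{hull}. The value-matching and smooth-pasting algebra, the boundedness argument killing the $S^{\alpha_-}$ branch, the verification comments, and the check $f(1)=-q<0$ giving $\alpha_C>1$ are all correct.

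One assertion does not survive scrutiny, however: you claim that solving the indicial equation $\tfrac12\sigma^2\alpha(\alpha-1)+r\alpha-(r+q)=0$ ``produces the larger root $\alpha_C$ in the closed form given in the statement.'' It does not. The larger root is
\[
\alpha_+ \;=\; \tfrac12-\tfrac{r}{\sigma^2}+\sqrt{\left(\tfrac{r}{\sigma^2}-\tfrac12\right)^2+\tfrac{2(r+q)}{\sigma^2}}
\;=\;\tfrac12-\tfrac{r}{\sigma^2}+\sqrt{\left(\tfrac{r}{\sigma^2}+\tfrac12\right)^2+\tfrac{2q}{\sigma^2}},
\]
whereas the corollary's displayed $\alpha_C$ has radicand $\left(\tfrac{r}{\sigma^2}+\tfrac12\right)^2+\tfrac{2(r+q)}{\sigma^2}$; the two differ by $\tfrac{2r}{\sigma^2}$ and coincide only when $r=0$. (E.g., with $r=0.05$, $\sigma=0.5$, $q=0.1$ the quadratic's larger root is $\approx 1.436$ while the printed formula gives $1.6$, and $f(1.6)\neq 0$.) This appears to be a typo in the corollary rather than an error in your method, but your proof silently endorses the printed expression; you should either derive and state the corrected $\alpha_C$ or explicitly verify $f(\alpha_C)=0$, which fails for the formula as written. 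The structure $\bar S_C=\alpha_C K/(\alpha_C-1)$ and $C_0=\frac{K}{\alpha_C-1}\bigl(\frac{(\alpha_C-1)S_0}{\alpha_C K}\bigr)^{\alpha_C}$ is correct once $\alpha_C$ denotes the true larger root.
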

\begin{corollary}[Put Option Pricing]\label{cor:put}
Consider a put AmPO with strike $K > 0$ and amortization rate $q > 0$. 
The optimal exercise boundary is $\bar S_P = \frac{\alpha_P K}{1+\alpha_P}$ for $\alpha_P = \sqrt{\left(\frac{r}{\sigma^2}+\frac{1}{2}\right)^2 + \frac{2q}{\sigma^2}} + \frac{r}{\sigma^2} - \frac{1}{2} > 0$. Under optimal execution, with $S_0 \geq \bar S_P$, the premium for this option is
\[P_0 = \frac{K}{1+\alpha_P}\left(\frac{\alpha_P K}{(1+\alpha_P)S_0}\right)^{\alpha_P}.\]
\end{corollary}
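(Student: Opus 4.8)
The plan is to supply the free-boundary argument that Corollary~\ref{cor:put} delegates to \cite[Chapter 26.2]{hull}. By Lemma~\ref{lemma:ampo}, it suffices to price the perpetual American put with payoff $\Phi(S) = (K-S)^+$, discount rate $r+q$, and dividend yield $q$; that is, to evaluate $V(S_0) = \sup_{\tau} \E^{\Q}\!\left[e^{-(r+q)\tau}(K-S_\tau)^+\right]$, the supremum taken over stopping times. Since the payoff is nonincreasing in the spot and the problem is time-homogeneous, I would first argue that the continuation region is a half-line $(\bar S_P,\infty)$ and the exercise region is $(0,\bar S_P]$ for some threshold $\bar S_P \in (0,K)$, on which $V(S) = K - S$; on the continuation region $V$ solves the Euler ODE from Lemma~\ref{lemma:ampo}, $\tfrac12\sigma^2 S^2 V''(S) + rSV'(S) - (r+q)V(S) = 0$.

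Next I would solve this ODE. The ansatz $V(S) = S^{\beta}$ gives the indicial equation $\tfrac12\sigma^2\beta(\beta-1) + r\beta - (r+q) = 0$; because $r+q>0$ the product of its two roots is negative, so there is exactly one positive root and one negative root $\beta_- < 0$, and I set $\alpha_P := -\beta_- > 0$, so that the quadratic formula produces the closed form for $\alpha_P$ stated in the corollary. A put is worthless in the limit $S \to \infty$, so the coefficient of the growing solution $S^{\beta_+}$ must vanish and $V(S) = B\,S^{-\alpha_P}$ on $(\bar S_P,\infty)$ for some constant $B > 0$.

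I would then pin down $B$ and $\bar S_P$ from the two conditions at the free boundary: value matching, $B\,\bar S_P^{-\alpha_P} = K - \bar S_P$, and smooth pasting, $-\alpha_P B\,\bar S_P^{-\alpha_P-1} = -1$. Dividing these eliminates $B$ and yields $\bar S_P = \alpha_P(K - \bar S_P)$, hence $\bar S_P = \tfrac{\alpha_P K}{1+\alpha_P}$ and $K - \bar S_P = \tfrac{K}{1+\alpha_P}$; back-substituting gives $B = \tfrac{K}{1+\alpha_P}\,\bar S_P^{\,\alpha_P}$. Evaluating at $S_0$ in the nontrivial case $S_0 > \bar S_P$ (for $S_0 \le \bar S_P$ immediate exercise gives $K - S_0$, and the two expressions agree at $S_0 = \bar S_P$) produces $P_0 = B\,S_0^{-\alpha_P} = \tfrac{K}{1+\alpha_P}\bigl(\tfrac{\alpha_P K}{(1+\alpha_P)S_0}\bigr)^{\alpha_P}$, as claimed.

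Since everything above only constructs a \emph{candidate}, I would close with a verification argument, which is the step that genuinely requires care rather than bookkeeping. Let $\hat V$ equal $K - S$ on $(0,\bar S_P]$ and $B\,S^{-\alpha_P}$ on $(\bar S_P,\infty)$. It is $C^1$ across $\bar S_P$ by the smooth-pasting choice, is $C^2$ off the boundary, satisfies $\hat V \ge \Phi$ and $0 \le \hat V \le K$ (boundedness makes the stopped discounted process a true martingale, removing integrability concerns), and obeys $\mathcal{L}\hat V - (r+q)\hat V \le 0$ on all of $(0,\infty)$ with $\mathcal{L}V := \tfrac12\sigma^2 S^2 V'' + rSV'$: this is an equality on the continuation region by construction, and on the exercise region it reduces to $qS - (r+q)K \le 0$, which holds because there $S \le \bar S_P < K \le \tfrac{(r+q)K}{q}$ (using $r \ge 0$). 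Applying It\^o's formula and optional stopping to $e^{-(r+q)t}\hat V(S_t)$ then gives $\hat V \ge V$, while evaluating along $\tau_{\bar S_P} := \inf\{t : S_t \le \bar S_P\}$ gives the reverse inequality, so $\hat V = V$ and the formula is exact. The main obstacles are thus confined to the verification mechanics — establishing the half-line structure of the stopping region and justifying that smooth pasting is the correct boundary condition — since the ODE solution and the boundary-condition algebra are routine.
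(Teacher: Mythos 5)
Your argument is the standard guess-and-verify free-boundary derivation that the paper deliberately omits (it delegates to Lemma~\ref{lemma:ampo} plus \cite[Chapter 26.2]{hull}), so in approach you are doing exactly what the paper intends, and your added verification step (supersolution property plus It\^o/optional stopping) goes beyond the paper's citation in a worthwhile way. The boundary algebra (value matching, smooth pasting, $\bar S_P = \alpha_P K/(1+\alpha_P)$, $K-\bar S_P = K/(1+\alpha_P)$) and the check $qS-(r+q)K\le 0$ on the exercise region are all correct.

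There is, however, one step you assert without checking, and it does not go through as stated: the claim that the quadratic formula applied to $\tfrac12\sigma^2\beta(\beta-1)+r\beta-(r+q)=0$ ``produces the closed form for $\alpha_P$ stated in the corollary.'' Solving that indicial equation, the magnitude of the negative root is
\[
\alpha_P \;=\; \sqrt{\left(\frac{r}{\sigma^2}-\frac{1}{2}\right)^2+\frac{2(r+q)}{\sigma^2}}\;+\;\frac{r}{\sigma^2}-\frac{1}{2}
\;=\;\sqrt{\left(\frac{r}{\sigma^2}+\frac{1}{2}\right)^2+\frac{2q}{\sigma^2}}\;+\;\frac{r}{\sigma^2}-\frac{1}{2},
\]
whereas the corollary displays $\sqrt{(\tfrac{r}{\sigma^2}+\tfrac12)^2+\tfrac{2(r+q)}{\sigma^2}}+\tfrac{r}{\sigma^2}-\tfrac12$; these agree only when $r=0$ (e.g., with $r=0.05$, $\sigma=0.5$, $q=0.1$ the displayed expression does not satisfy the indicial equation, while the root above does). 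The discrepancy is almost certainly a sign typo in the paper's stated $\alpha_P$ (and likewise $\alpha_C$), not an error in your derivation, but as written your proof ends by declaring a match that is false; you should either display the root you actually obtain or note explicitly that the corollary's expression needs the $+\tfrac12$ replaced by $-\tfrac12$ (equivalently $r+q$ replaced by $q$) inside the radical. Everything downstream of $\alpha_P$ in your argument is unaffected.
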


While Corollaries~\ref{cor:call} and~\ref{cor:put} provide explicit expressions for the AmPO value process, we now verify that these are the unique value processes of the self-referential CI variant.
\begin{corollary}\label{prop:bs-CI}
Consider either the call or put AmPO with strike $K > 0$ and amortization rate $q > 0$. 
The AmPO value process is the unique fixed point of $\mathcal{C}$ in $\mathbb{V}$. 
\end{corollary}
\begin{proof}
Consider the call and put AmPOs, i.e., $\Phi(S) = (S-K)^+$ or $\Phi(S) = (K-S)^+$ respectively. 
The AmPO value process satisfies $V \in \mathbb{V}$ under Assumption~\ref{ass:gbm} due to the bounds for puts ($0 \leq V_t \leq K$) and calls ($0 \leq V_t \leq S_t$, together with the discounted martingale construction of $(S_t)_{t \geq 0}$). To conclude this proof, let $\tau^*$ denote the exercise-boundary hitting time associated with the appropriate option and set $\tau^T := \tau^* \wedge T$. 
Due to the Markovian property of the geometric Brownian motion, herein we only provide the fixed point relation at time $t = 0$ while all other times follow comparably. 
Note that $V_0 = \E^\Q[e^{-(r+q)\tau^T} V_{\tau^T}]$ for any $T > 0$. Using the notation from the proof of Proposition~\ref{prop:ampo-CI}, by way of Tonelli's theorem and the tower property:
\begin{align*}
\bar{\mathcal{C}}^{\tau^T}(V)_0 &= \E^\Q\left[e^{-r\tau^T} \Phi(S_{\tau^T}) - \int_0^{\tau^T} e^{-rs} q \E^\Q\left[e^{-(r+q)(\tau^T-s)} V_{\tau^T} \, \big| \, \mathcal{F}_s\right] ds \right] \\
&= \E^\Q\left[e^{-r\tau^T} \Phi(S_{\tau^T}) - e^{-r\tau^T} V_{\tau^T} \int_0^{\tau^T} q e^{-q (\tau^T-s)} ds \right] \\
&= V_0 + \E^\Q\left[e^{-rT} (\Phi(S_T) - V_T)\mathbb{I}_{\{\tau^T = T\}}\right].
\end{align*}
Note that the residual term on $\{\tau^T = T\}$ limits to 0 a.s.\ for both call and put AmPOs. Moreover, this residual term is dominated in both cases by integrable random variables (by $\bar{S}_C$ and $K$ for calls and puts respectively). As such, by the dominated convergence theorem, the expectation of this residual term converges to zero.
Therefore, letting $T \to \infty$, we find $\mathcal{C}(V)_0 \geq V_0$ which completes the proof when combined with steps 1 and 2 of the proof of Proposition~\ref{prop:ampo-CI}.
\end{proof}

As call and put AmPOs can be valued via an equivalence relation with perpetual American options as provided in Lemma~\ref{lemma:ampo} and the above explicit representations, we can follow standard results to determine the dependence of these structures on the underlying system parameters. Such results are summarized in Table~\ref{tab:greeks}.
\begin{remark}
Via the relation to perpetual American options, we consider the Greeks of the valuation for a \emph{constant} notional exposure, i.e., dependence of the explicit premia $C_0$ and $P_0$ for calls and puts, respectively. However, we note that the realized exposure is amortizing for an investor that is either long or short AmPOs. Therefore, in practice, the realized Greeks should be discounted by the amortization factor appropriately. 

This is especially prominent for Theta which, typically, refers to the dependence of the premium on the ``time to maturity''. Because AmPOs are perpetual options, the risk-neutral price exhibits zero explicit time decay (e.g., $\frac{\partial C_0}{\partial t} = 0$ for call options). 
However, due to the amortization, the holder or underwriter of such options will, in fact, have an \emph{economic exposure} that depends on time. This decay is directly computable as $-q C_0 < 0$ or $-q P_0 < 0$ for call and put options, respectively. It is exactly this non-zero Theta(-like) decay that distinguishes AmPOs from vanilla perpetual options and creates an effective realized time exposure.
\end{remark}

We conclude this discussion of risk-neutral valuation by considering how the amortization $q$ can mimic an effective maturity date $T_{\text{eff}}$. That is, for a given amortization rate $q$, we define the effective maturity $T_{\text{eff}} := \inf\{T \geq 0 \; | \; V_0(S_0;r,\sigma,q) \leq V^{A}(S_0;T,r,\sigma)\}$ as the maturity $T$ for a dated American option with the same market dynamics ($S_0,r,\sigma$) which matches the AmPO price.\footnote{The effective maturity $T_{\text{eff}} \in (0,\infty)$ for any $q \in (0,\infty)$ is such that $V_0(S_0;r,\sigma,q) = V^{A}(S_0;T_{\text{eff}},r,\sigma)$ as $V^{A}(S_0;0,r,\sigma) = \Phi(S_0) < V_0(S_0;r,\sigma,q) < \lim_{T \to \infty} V^{A}(S_0;T,r,\sigma)$ by Lemma~\ref{lemma:ampo}; see also Remark~\ref{rem:limit}.} 
\begin{example}\label{ex:maturity}
\begin{figure}[t]
\centering
\begin{subfigure}[t]{0.4\textwidth}
\centering
\includegraphics[width=\textwidth]{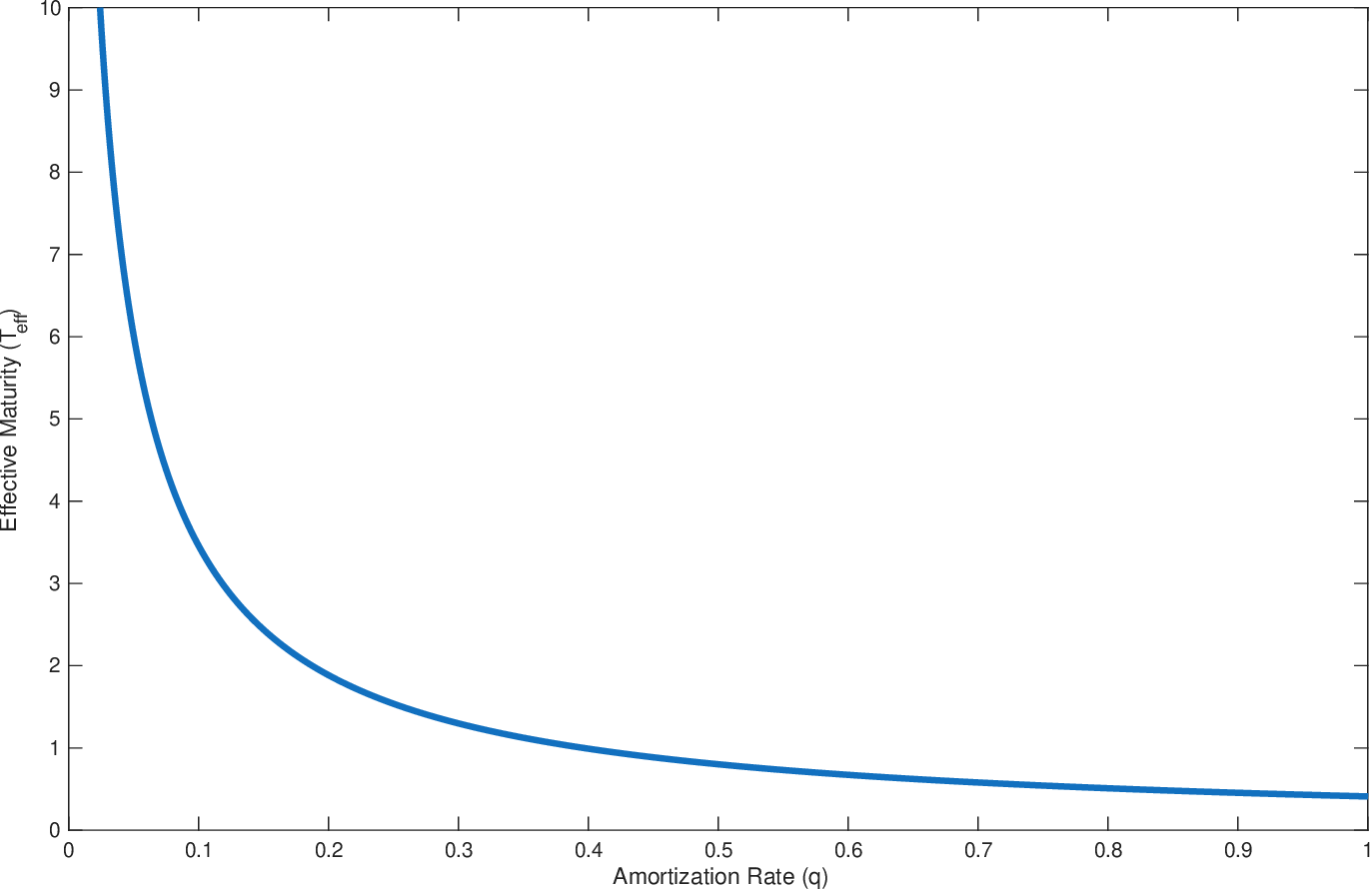}
\caption{The effective maturity date $T_{\text{eff}}$ as a function of the amortization rate $q$.}
\label{fig:maturity-T}
\end{subfigure}
~~
\begin{subfigure}[t]{0.4\textwidth}
\includegraphics[width=\textwidth]{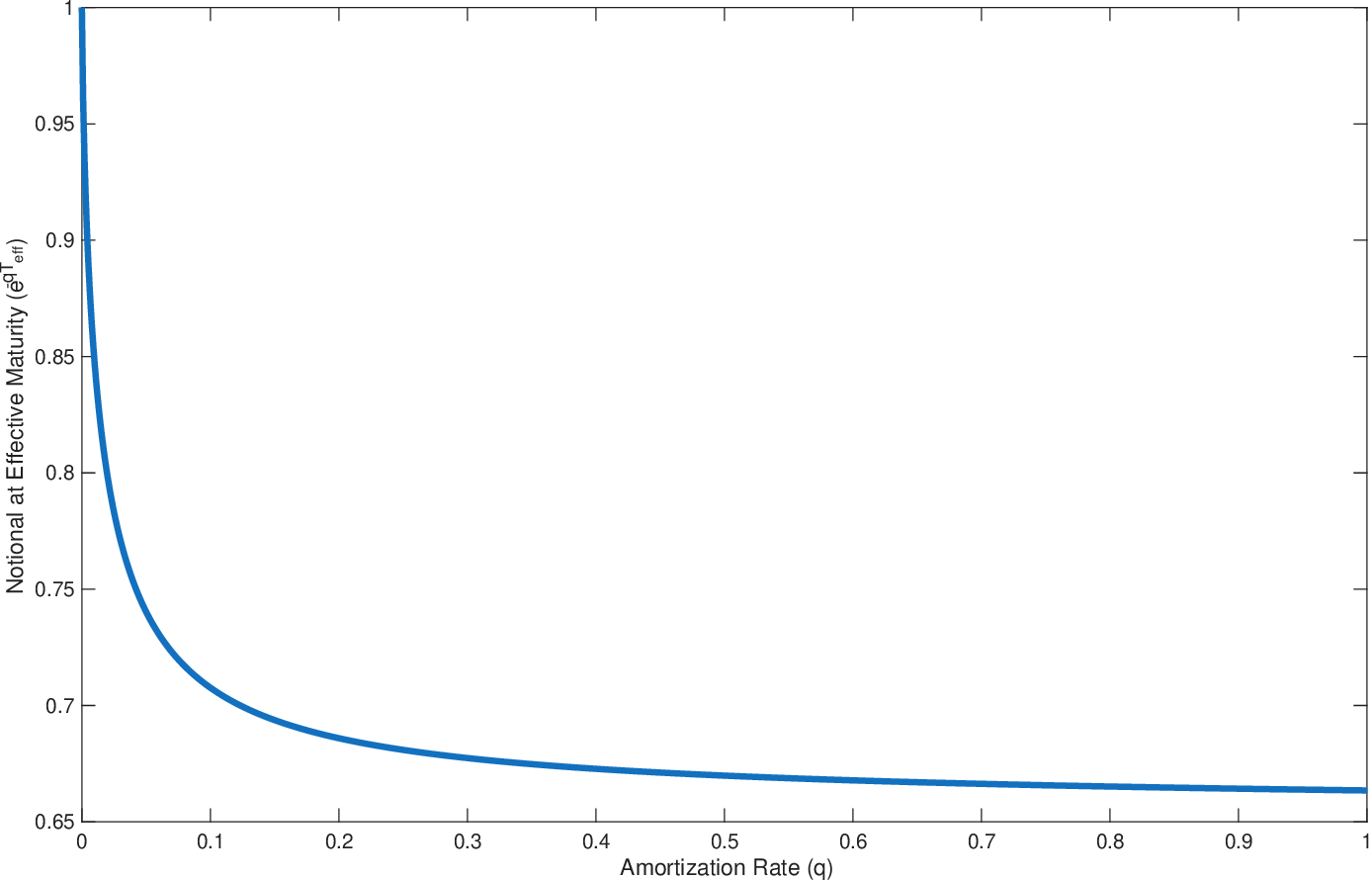}
\caption{The effective notional ($e^{-qT_{\text{eff}}}$) of the AmPO with amortization rate $q$ at its effective maturity date $T_{\text{eff}}$.}
\label{fig:maturity-notional}
\end{subfigure}
~~
\begin{subfigure}[t]{0.4\textwidth}
\centering
\includegraphics[width=\textwidth]{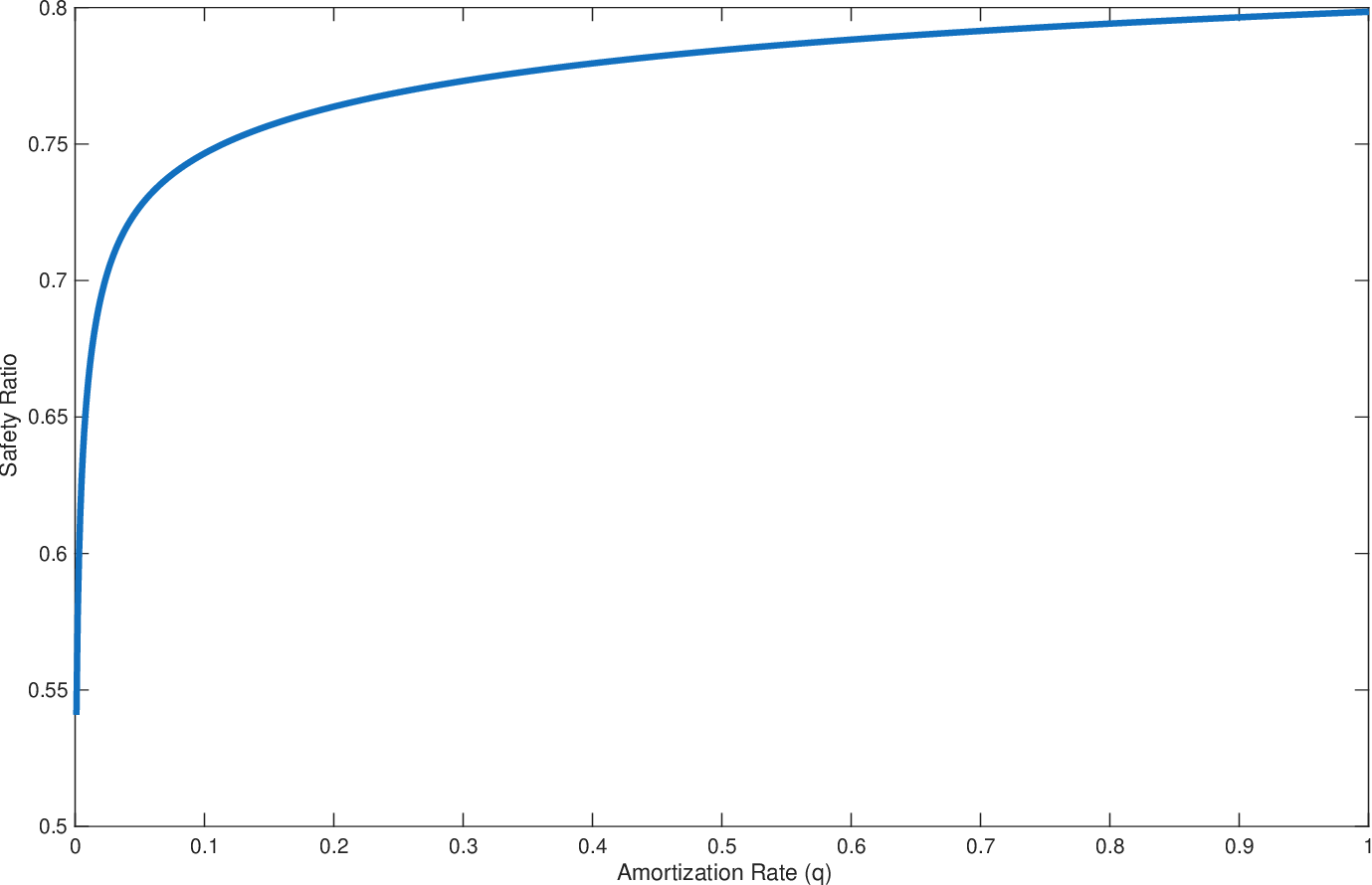}
\caption{The ratio of the AmPO Gamma against its effectively dated peer.}
\label{fig:greeks-gamma}
\end{subfigure}
~~
\begin{subfigure}[t]{0.4\textwidth}
\includegraphics[width=\textwidth]{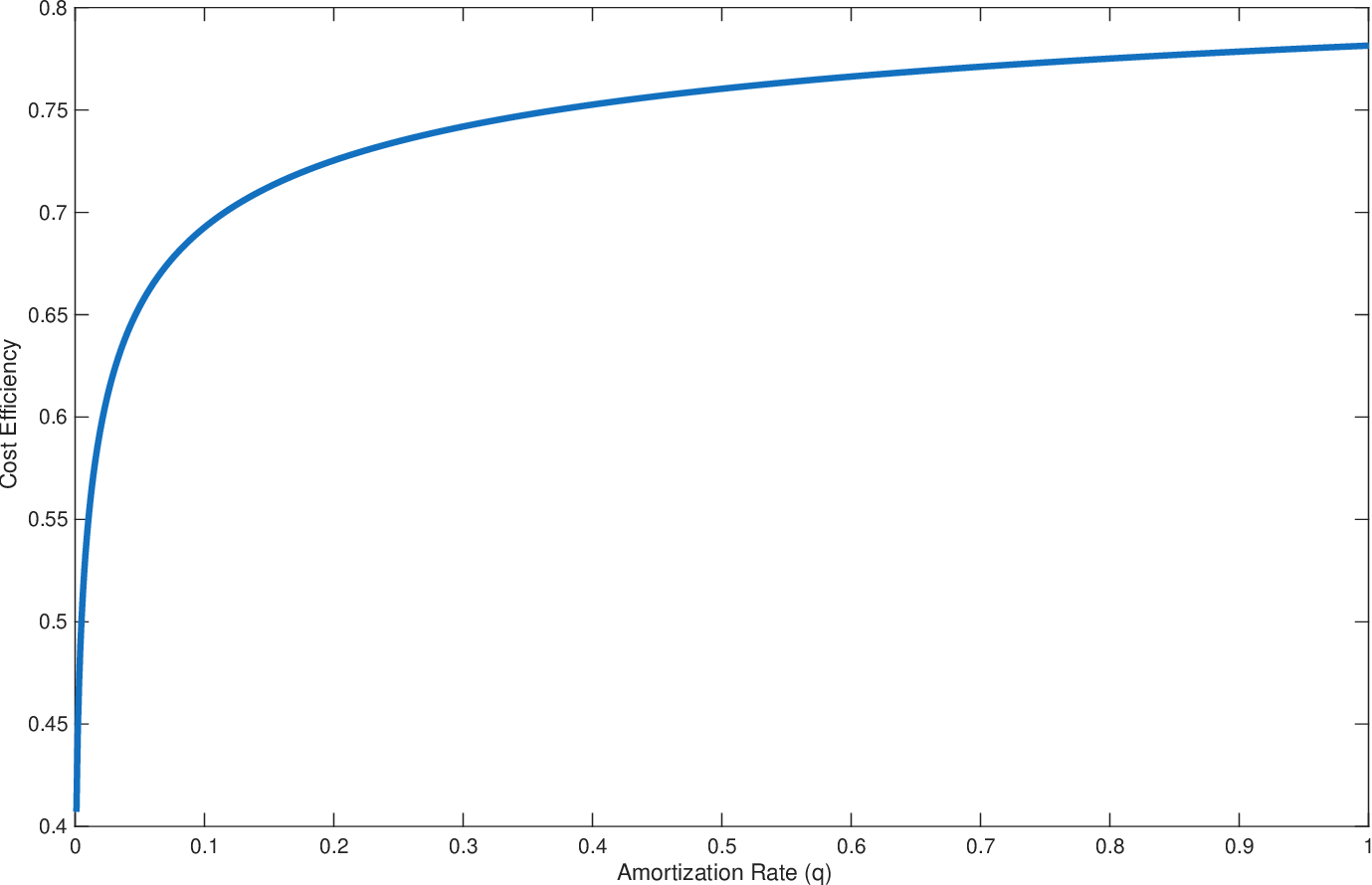}
\caption{The ratio of the AmPO (economic) Theta against its effectively dated peer.}
\label{fig:greeks-theta}
\end{subfigure}
\caption{Example~\ref{ex:maturity}: Comparison of AmPO and dated at-the-money call options.}
\label{fig:maturity}
\end{figure}
Consider an asset price following Assumption~\ref{ass:gbm} with risk-free rate $r = 5\%$, volatility $\sigma = 50\%$, and initial price $S_0 = 100$. Consider the at-the-money call option (i.e., $K = 100$) with amortization rate $q$. 

In Figure~\ref{fig:maturity-T}, we plot the effective maturity $T_{\text{eff}}$ of the at-the-money (American) call option that has the same premium as the AmPO with amortization rate $q$.\footnote{As the assets under consideration here pay 0 dividends, American and European call options have the same premium.} As expected, the effective maturity $T_{\text{eff}}$ is decreasing in the amortization rate $q$. 
However, Figure~\ref{fig:maturity-notional} reveals that $T_{\text{eff}}$ decreases at a slower rate than $q$ increases. Consequently, the effective notional (the claimable notional $e^{-qT_{\text{eff}}}$ remaining at the effective maturity $T_{\text{eff}}$) is decreasing in the amortization rate $q$. Crucially, this effective notional illustrates that this effective maturity is solely a pricing equivalence and does not represent even an approximate termination of the AmPO; unlike the dated option which expires at $T_{\text{eff}}$, the AmPO retains a substantial portion of its notional at that time.
For the selected parameters, even at high amortization rates ($q \approx 1$), the AmPO has over 65\% of its original notional value remaining at the time that the equivalent dated option would expire.

Though the premiums of the AmPO and its effectively dated counterpart coincide, their sensitivities do not because the underlying stopping problems are distinct. For this reason, we now compare (select) Greeks of AmPOs against their effectively dated counterparts.
First, in Figure~\ref{fig:greeks-gamma}, we plot the ratio of the AmPO Gamma to the Gamma of the effectively dated (American) call option, i.e., $\Gamma_0(S_0;r,\sigma,q)/\Gamma^{A}(S_0;T_{\text{eff}},r,\sigma)$. 
The lower this ratio, the more stable the option price is to fluctuations in the underlying; for this reason, we call this the ``safety ratio.'' Notably, this safety ratio is increasing in the amortization rate, but stays below 80\% for even extremely high amortization rates (e.g., $q \approx 1$).
Similarly, in Figure~\ref{fig:greeks-theta}, we consider the ``cost efficiency'' of AmPOs by computing the ratio between the economic Theta (i.e., $-q C_0$) from amortization against the Theta of the effectively dated option, i.e., $-q C_0 / \Theta^{A}(S_0;T_{\text{eff}},r,\sigma)$. For a cost efficiency of less than 1, the AmPO's loss in notional from amortization is less than the loss in the effectively dated option from the passage of time. Here, again, we find that the dated options lose value faster even at high amortization rates with a maximum observed efficiency ratio (at $q = 1$) of under 80\%.
\end{example}

\subsubsection{Comparative Statics on Amortization Rate}\label{sec:bs-amort}

As the amortization rate $q$ appears directly within both the effective risk-free rate ($r+q$) and effective dividend yield ($q$) of the equivalently valued perpetual American option, the choice of this contractual parameter can have a strong influence on the valuation and Greeks of AmPOs. 
However, since we have proposed AmPOs to maintain a constant $q$ throughout the lifetime of the option, we do not want to consider this dependence in the same way as the Greeks (i.e., for hedging purposes), but rather as a way to understand the implications and tradeoffs of choosing the amortization rate $q$ for a contract. In doing so, we will focus on two aspects of the AmPO contracts: the risk-neutral valuation and the optimal exercise boundary. 
Throughout this section, we set $\bar\alpha := (\alpha_C + \alpha_P)/2 > 0$ for notational simplicity.

\begin{assumption}
For simplicity of notation, throughout this section we consider $S_0 \leq \bar S_C$ for call options and $S_0 \geq \bar S_P$ for put options.
\end{assumption}

First, as follows from~\cite[Proposition 5]{quah2009comparative} when applied to~\eqref{eq:exercise} directly, increasing $q$ accelerates economic amortization, reducing the value for the holder and, therefore, also the premium. 
\begin{corollary}\label{cor:q-premium}
Consider call and put AmPOs with strike $K > 0$. The premium for both option types is decreasing in the amortization rate $q > 0$:
\[\frac{\partial C_0}{\partial q} = \frac{C_0}{\sigma^2\bar\alpha} \log\left(\frac{(\alpha_C-1)S_0}{\alpha_C K}\right) \leq 0 \quad \text{ and } \quad \frac{\partial P_0}{\partial q} = -\frac{P_0}{\sigma^2\bar\alpha} \log\left(\frac{(1+\alpha_P)S_0}{\alpha_P K}\right) \leq 0.\]
\end{corollary}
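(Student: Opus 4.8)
The plan is to differentiate the closed-form premia from Corollaries~\ref{cor:call} and~\ref{cor:put} with respect to $q$, treating $\alpha_C$ and $\alpha_P$ as functions of $q$ through the common radical. First I would record the key derivative identities for the exponents. Writing $\beta := \sqrt{(r/\sigma^2 + 1/2)^2 + 2(r+q)/\sigma^2}$, we have $\alpha_C = \beta - r/\sigma^2 + 1/2$ and $\alpha_P = \beta + r/\sigma^2 - 1/2$, so $\frac{\partial \alpha_C}{\partial q} = \frac{\partial \alpha_P}{\partial q} = \frac{\partial \beta}{\partial q} = \frac{1}{\sigma^2 \beta}$. Observing that $\alpha_C + \alpha_P = 2\beta$, i.e.\ $\beta = \bar\alpha$, this becomes $\frac{\partial \alpha_C}{\partial q} = \frac{\partial \alpha_P}{\partial q} = \frac{1}{\sigma^2 \bar\alpha}$, which is the factor that will appear in the final expressions.

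Next I would take logarithms to make the differentiation clean. For the call, $\log C_0 = \log K - \log(\alpha_C - 1) + \alpha_C \log\!\left(\frac{(\alpha_C-1)S_0}{\alpha_C K}\right) = \log K - \log(\alpha_C-1) + \alpha_C\big[\log(\alpha_C - 1) + \log S_0 - \log \alpha_C - \log K\big]$. Differentiating in $q$ via the chain rule through $\alpha_C$: the terms involving $\frac{\partial}{\partial \alpha_C}$ of the bracket multiplied by $\alpha_C$ are $\alpha_C\big(\frac{1}{\alpha_C - 1} - \frac{1}{\alpha_C}\big)$, and the derivative of $-\log(\alpha_C - 1)$ contributes $-\frac{1}{\alpha_C - 1}$; adding these gives $\frac{\alpha_C}{\alpha_C-1} - 1 - \frac{1}{\alpha_C - 1} = \frac{\alpha_C - 1 - 1}{\alpha_C - 1}$... wait, more carefully: $\frac{\alpha_C}{\alpha_C-1} - \frac{1}{\alpha_C-1} - 1 = \frac{\alpha_C - 1}{\alpha_C - 1} - 1 = 0$. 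So the only surviving term is the explicit $\log\!\left(\frac{(\alpha_C-1)S_0}{\alpha_C K}\right)$ from differentiating the outer exponent $\alpha_C$, times $\frac{\partial \alpha_C}{\partial q}$. Hence $\frac{1}{C_0}\frac{\partial C_0}{\partial q} = \frac{1}{\sigma^2\bar\alpha}\log\!\left(\frac{(\alpha_C-1)S_0}{\alpha_C K}\right)$, giving the claimed formula. The put case is entirely analogous: $\log P_0 = \log K - \log(1+\alpha_P) + \alpha_P\big[\log(\alpha_P K) - \log((1+\alpha_P)S_0)\big]$, the internal terms again telescope to zero (the identity $\frac{1}{1+\alpha_P} - \frac{\alpha_P}{1+\alpha_P} + \frac{\alpha_P}{\alpha_P}$ arranged appropriately cancels, using $-\frac{1}{1+\alpha_P}$ from the prefactor), and one is left with $-\frac{1}{\sigma^2\bar\alpha}\log\!\left(\frac{(1+\alpha_P)S_0}{\alpha_P K}\right)$, noting the sign flip because the $\log$ argument in $P_0$ is inverted relative to the call convention.

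Finally I would establish the sign claims. Under the standing assumption $S_0 \le \bar S_C = \frac{\alpha_C K}{\alpha_C - 1}$, we have $\frac{(\alpha_C-1)S_0}{\alpha_C K} \le 1$, so its logarithm is $\le 0$ and $\frac{\partial C_0}{\partial q} \le 0$. Similarly $S_0 \ge \bar S_P = \frac{\alpha_P K}{1 + \alpha_P}$ gives $\frac{(1+\alpha_P)S_0}{\alpha_P K} \ge 1$, so $\log(\cdot) \ge 0$ and the leading minus sign yields $\frac{\partial P_0}{\partial q} \le 0$. I do not expect a genuine obstacle here — the computation is routine once the telescoping cancellation is spotted — but the one step deserving care is verifying that the $\alpha$-dependence inside the base of the power and inside the $\frac{1}{\alpha_C - 1}$ (resp.\ $\frac{1}{1+\alpha_P}$) prefactor exactly cancels, so that the derivative reduces to differentiating only the outer exponent; this is precisely the same "smooth pasting / envelope" phenomenon that makes the optimal-exercise value stationary in the boundary, and it is worth remarking that $\alpha_C$ and $\alpha_P$ share the derivative $1/(\sigma^2\bar\alpha)$ because of the identity $\alpha_C + \alpha_P = 2\bar\alpha = 2\beta$.
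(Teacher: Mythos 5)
Your proposal is correct and follows essentially the same route as the paper's proof: chain rule through $\alpha_C$ (resp.\ $\alpha_P$) using $\partial\alpha_C/\partial q = 1/(\sigma^2\bar\alpha)$ and $\partial C_0/\partial\alpha_C = C_0\log\bigl(\tfrac{(\alpha_C-1)S_0}{\alpha_C K}\bigr)$, with the sign fixed by the standing assumption $S_0 \le \bar S_C$ (resp.\ $S_0 \ge \bar S_P$). You merely spell out, via logarithmic differentiation, the telescoping cancellation that the paper states without detail.
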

\begin{proof}
Consider the call AmPO; we omit the proof of put AmPOs as the arguments follow comparably.
Recall from Corollary~\ref{cor:call} that the premium of the call AmPO $C_0$ depends on the amortization rate $q$ solely through $\alpha_C$. Thus, through an application of the chain rule, we find $\frac{\partial C_0}{\partial q} = \frac{\partial C_0}{\partial\alpha_C}\frac{\partial \alpha_C}{\partial q}$. By differentiation, we find
$\frac{\partial C_0}{\partial \alpha_C} = C_0\log\left(\frac{(\alpha_C-1)S_0}{\alpha_C K}\right) \leq 0$ and 
$\frac{\partial\alpha_C}{\partial q} = \frac{1}{\sigma^2\bar\alpha} > 0$
thus completing the proof.
\end{proof}

\begin{remark}\label{rem:limit}
Within Corollary~\ref{cor:q-premium}, we found that the premium for both AmPO calls and puts are decreasing in the amortization rate $q$. In fact, it can easily be seen that the limiting behavior of these options is to vanilla perpetual American options as $q \searrow 0$ and to the intrinsic value of the option as $q \nearrow \infty$.
\end{remark}

As the option premium decreases in the amortization rate, the optimal exercise boundary is similarly decaying towards the strike price (i.e., decreasing for call options and increasing for put options). In this way, the optimal exercise becomes more conservative as the amortization accelerates.
\begin{corollary}\label{cor:q-exercise}
Consider call and put AmPOs with strike $K > 0$. The optimal exercise boundary is decreasing to $K$ for call options and increasing to $K$ for put options as $q$ increases, i.e.,
\[\frac{\partial \bar S_C}{\partial q} = -\frac{K}{\sigma^2(\alpha_C-1)^2\bar\alpha} \leq 0 \quad \text{ and } \quad \frac{\partial \bar S_P}{\partial q} = \frac{K}{\sigma^2(1+\alpha_P)^2\bar\alpha} \geq 0\]
with $\lim_{q \nearrow \infty} \bar S_C = K$ and $\lim_{q \nearrow \infty} \bar S_P = K$.
\end{corollary}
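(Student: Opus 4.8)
The plan is to differentiate the closed-form boundaries from Corollaries~\ref{cor:call} and~\ref{cor:put} via the chain rule, reusing the fact $\partial\alpha_C/\partial q = 1/(\sigma^2\bar\alpha)$ already established in the proof of Corollary~\ref{cor:q-premium}. First I would record the elementary bookkeeping observation that, writing $\beta := \sqrt{(r/\sigma^2 + 1/2)^2 + 2(r+q)/\sigma^2}$, one has $\alpha_C = \beta - r/\sigma^2 + 1/2$ and $\alpha_P = \beta + r/\sigma^2 - 1/2$, so that $\bar\alpha = (\alpha_C+\alpha_P)/2 = \beta$ and consequently $\partial\alpha_C/\partial q = \partial\alpha_P/\partial q = \partial\beta/\partial q = 1/(\sigma^2\bar\alpha) > 0$. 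This makes explicit why the two boundary derivatives share the same $\bar\alpha$ in the denominator.

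Next, since $\bar S_C = \alpha_C K/(\alpha_C - 1)$ depends on $q$ only through $\alpha_C$, I would compute $\partial \bar S_C/\partial\alpha_C = -K/(\alpha_C-1)^2$ and combine it with the chain rule to obtain $\partial\bar S_C/\partial q = -K/\bigl(\sigma^2(\alpha_C-1)^2\bar\alpha\bigr)$, which is $\le 0$ because $\alpha_C > 1$ and $\bar\alpha > 0$. Symmetrically, for $\bar S_P = \alpha_P K/(1+\alpha_P)$ I would compute $\partial\bar S_P/\partial\alpha_P = K/(1+\alpha_P)^2$ and hence $\partial\bar S_P/\partial q = K/\bigl(\sigma^2(1+\alpha_P)^2\bar\alpha\bigr) \ge 0$, giving the two displayed formulas.

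For the limiting behavior, I would observe that $\beta \to \infty$ as $q \nearrow \infty$, hence $\alpha_C, \alpha_P \to \infty$; then $\bar S_C = K/(1 - 1/\alpha_C) \to K$ and $\bar S_P = K/(1 + 1/\alpha_P) \to K$. Together with the monotonicity just established, and the standing facts $\bar S_C > K$ and $\bar S_P < K$ for every $q > 0$ (immediate from $\alpha_C > 1$ and $\alpha_P > 0$), this yields that $\bar S_C$ decreases to $K$ and $\bar S_P$ increases to $K$. As with the other corollaries in this section, the put case can be stated to follow by the same argument and the explicit put case can be abbreviated.

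I do not anticipate a substantive obstacle: the whole argument is the chain rule plus the $\partial\alpha/\partial q$ identity imported from Corollary~\ref{cor:q-premium}. The only points demanding minor care are (i) confirming $\bar\alpha = \beta$ so that $\partial\alpha_C/\partial q = \partial\alpha_P/\partial q$, and (ii) noting that constant-sign derivative plus the computed limit upgrades "converges to $K$" to "monotonically converges to $K$."
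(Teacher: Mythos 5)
Your proposal is correct and follows exactly the route the paper takes: the paper's own proof is a one-line appeal to the chain rule via $\partial\alpha_C/\partial q = \partial\alpha_P/\partial q = 1/(\sigma^2\bar\alpha)$ from Corollary~\ref{cor:q-premium}, and your computation simply fills in the details (including the useful observation that $\bar\alpha$ equals the common square-root term $\beta$). All derivative formulas and the limiting argument check out.
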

\begin{proof}
As with the proof of Corollary~\ref{cor:q-premium}, this result trivially follows from an application of the chain rule as $\bar S_C,\bar S_P$ depend only on $q$ through $\alpha_C,\alpha_P$ respectively.
\end{proof}

\begin{remark}\label{rem:statics-q}
Though we have provided Corollaries~\ref{cor:q-premium} and~\ref{cor:q-exercise} for the Black-Scholes setting only, an application of~\cite[Proposition 5]{quah2009comparative} indicates that the monotonicity of the dependence on the amortization rate $q$ found in those results holds also for general price processes.
\end{remark}

Finally, we highlight that the dependence of Vega on the amortization rate is more subtle. As $q$ increases, the AmPO becomes closer to intrinsic value, which reduces volatility sensitivity; however, when $q$ is small, increasing amortization can also move the contract away from the degenerate (non-amortizing) perpetual option limit, thereby increasing volatility sensitivity. Therefore, unlike the premium and exercise boundary, Vega need not be monotone in $q$.
In Example~\ref{ex:maturity}, we analyzed how AmPOs compare with dated options. In the following example, we wish to consider how selecting the amortization rate can be used to optimize a volatility-based strategy. 
\begin{example}\label{ex:amort}
Consider the setting of Example~\ref{ex:maturity}. Recall from Corollary~\ref{cor:q-premium} that the premium of an AmPO is decreasing in the amortization rate $q$, while the Vega per unit of notional need not be monotone in $q$. As the realized Vega of a position depends on the budget actually invested, we now consider how the ratio of the Vega to premium ($\nu_0/V_0$) behaves to consider a simplified optimization problem: consider an investor who wants to take as leveraged a volatility position as possible using AmPOs with a \$100 budget constraint. To do so, we consider three strategies: bullish volatility with at-the-money call options (Figure~\ref{fig:amort-call}); bearish volatility with at-the-money put options (Figure~\ref{fig:amort-put}); and neutral volatility with an at-the-money straddle (Figure~\ref{fig:amort-straddle}). 
As observed in Figure~\ref{fig:amort}, the total positional Vega depends on the interaction between the Vega $\nu_0$ and the option premium $V_0$.
Notably, though bullish and neutral strategies have increasing positional Vegas, the bearish strategy has a non-monotonic behavior coming from the non-trivial interaction of the premium and Vega. For the current parameter setting, the optimal bearish volatility strategy (which also provides the maximum positional Vega among all 3 strategies) is with an amortization rate of $q^* \approx 19.26\%$.
\begin{figure}[t]
\centering
\begin{subfigure}[t]{0.3\textwidth}
\centering
\includegraphics[width=\textwidth]{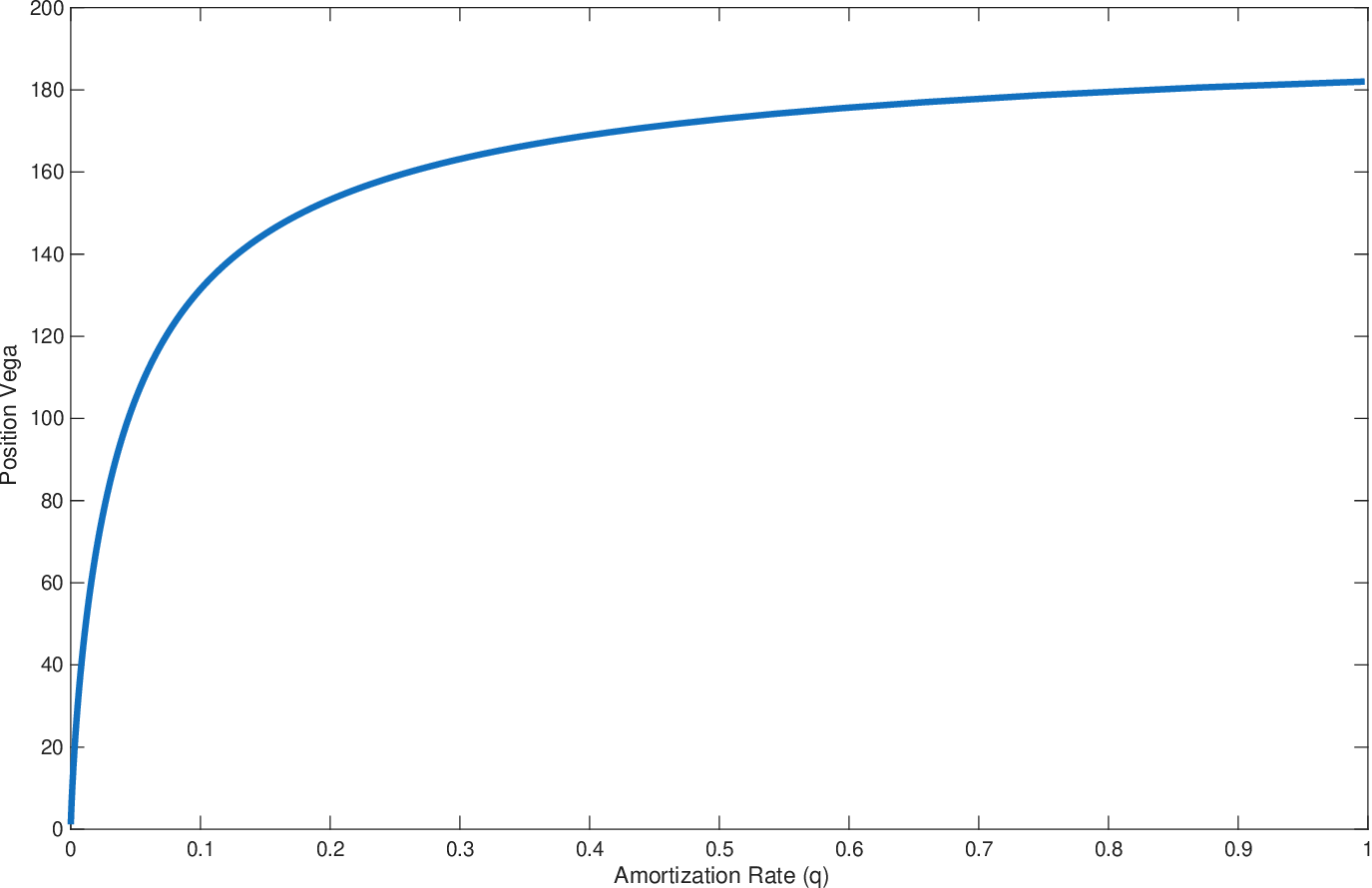}
\caption{Call options}
\label{fig:amort-call}
\end{subfigure}
~
\begin{subfigure}[t]{0.3\textwidth}
\includegraphics[width=\textwidth]{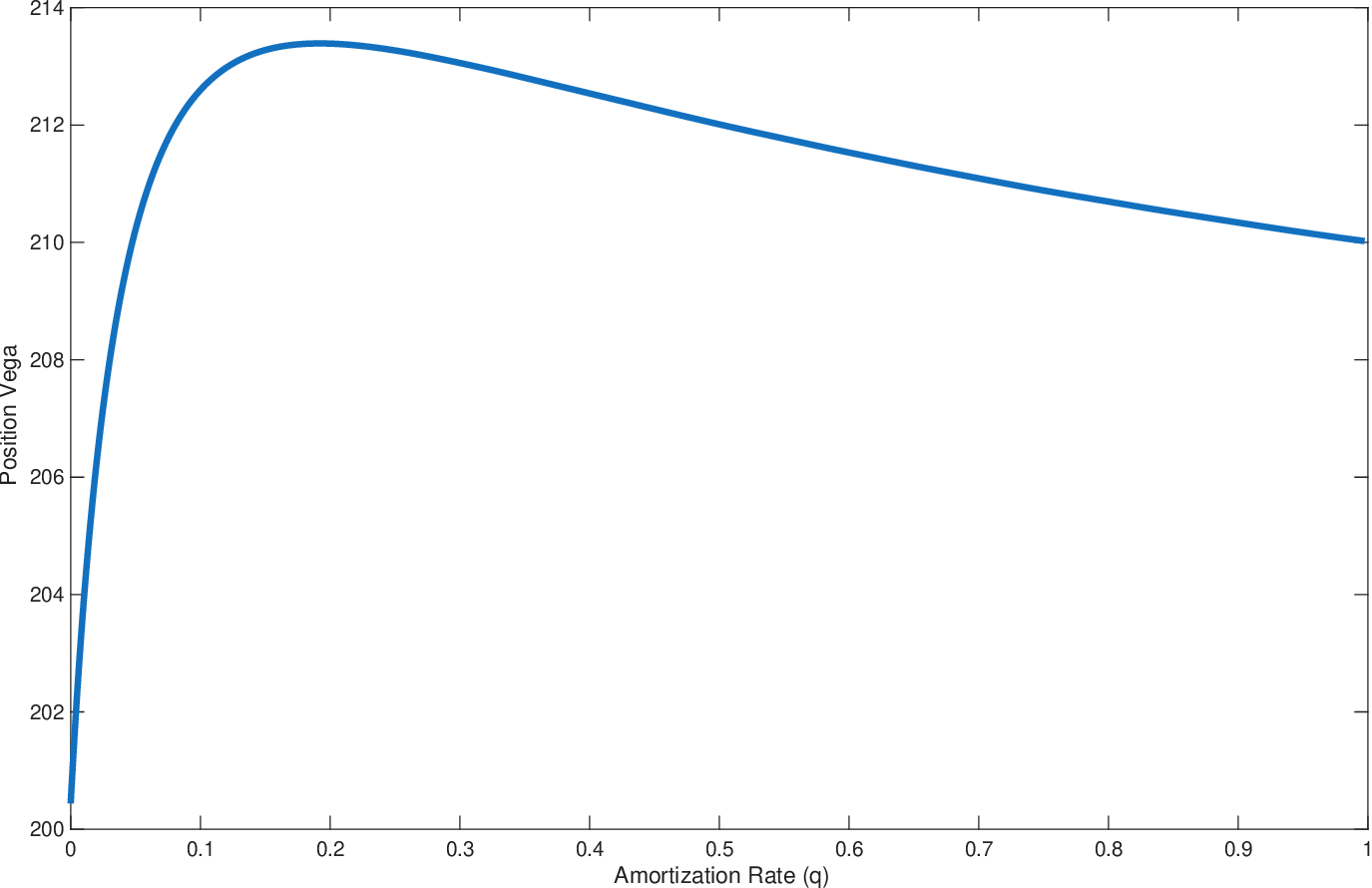}
\caption{Put options}
\label{fig:amort-put}
\end{subfigure}
~
\begin{subfigure}[t]{0.3\textwidth}
\includegraphics[width=\textwidth]{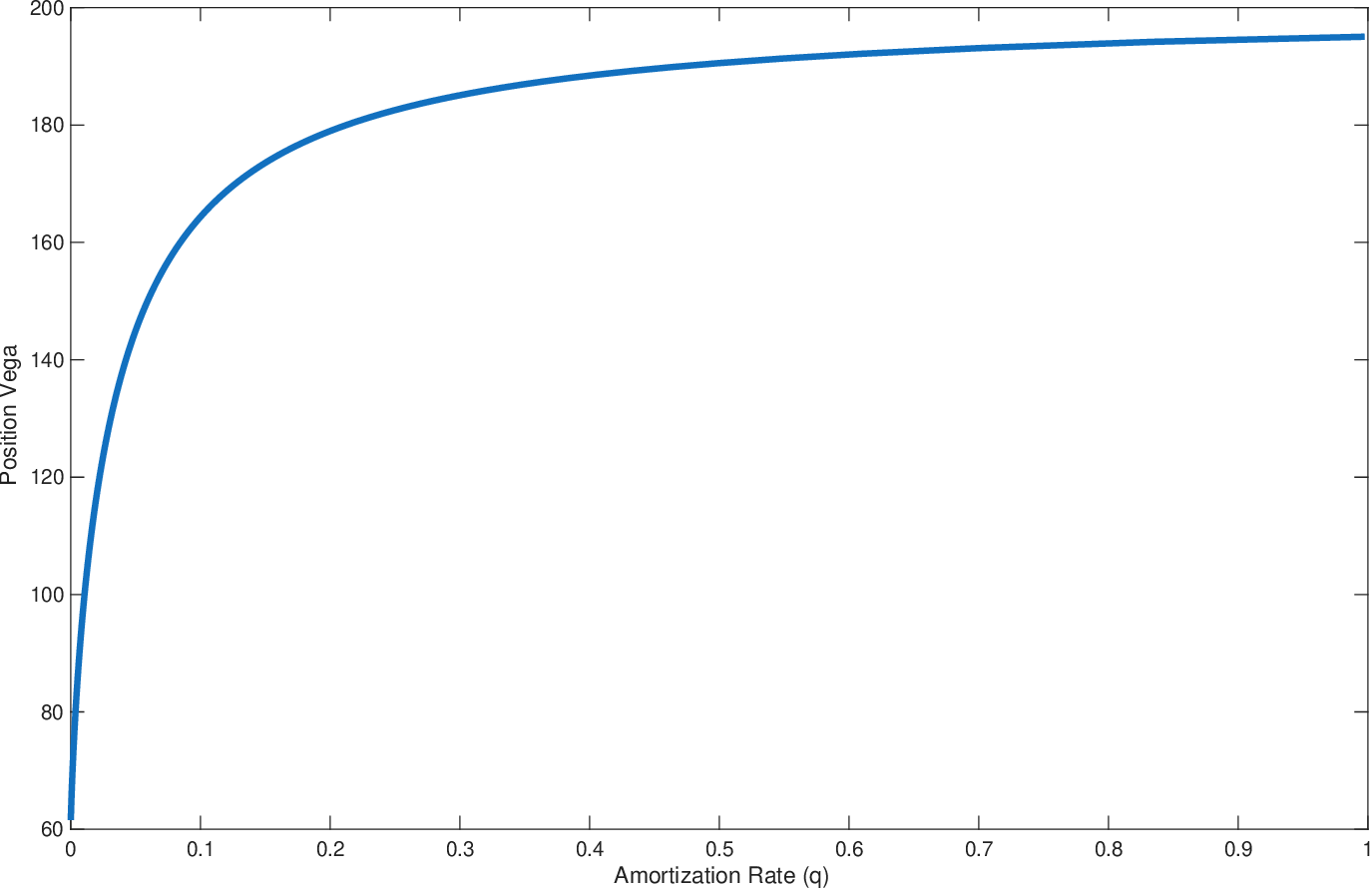}
\caption{Straddle}
\label{fig:amort-straddle}
\end{subfigure}
\caption{Example~\ref{ex:amort}: Vega of a \$100 position in call, put, and straddle strategies.}
\label{fig:amort}
\end{figure}
\end{example}

\section{Discussion and Applications}\label{sec:discussion}

\subsection{Setting the Amortization Rate}\label{sec:discussion-amort}

In Section~\ref{sec:bs-const}, we imposed a fixed $q_t \equiv q > 0$ amortization rate. While the general theory permits endogenous amortization (i.e., dependent on the option value), such a construction can introduce novel manipulations in AmPOs in practice. For instance, consider an AmPO with the fixed dollar installment rate $c_t \equiv c$, i.e., $q_t = c / V_t$, as is often taken for CI options (see, e.g., Example~\ref{ex:CI}). For a deep out-of-the-money AmPO ($V_t \searrow 0$), the amortization rate diverges to infinity ($q_t \nearrow \infty$) causing the open interest in the option series to crash to 0. This is the economically justified result \emph{if} the fair value is near 0; however, if the price is temporarily manipulated towards the 0 lower bound then this attack can effectively wipe out all long option positions. 
In particular, within decentralized finance where price manipulations are a constant threat, this endogenous amortization rate is particularly fragile.
In contrast, the fixed amortization rate $q_t \equiv q$ is \emph{not} subject to such manipulations as the amortization schedule is invariant to market prices. As such, though only a special case of AmPOs, we recommend setting the amortization rate as a fixed constant in practice.

Once the design space is restricted to the fixed amortization rate AmPOs, it still remains necessary to choose this rate $q > 0$. As AmPOs are exchange-tradable instruments, it is necessary to impose standard contracts in order to concentrate liquidity in specific markets; this is akin to setting a standard choice of maturity date for options (e.g., the third Friday of the month). As evidenced by Example~\ref{ex:amort}, it is clear that there does not exist a Pareto optimal amortization rate $q$ amongst all market participants. Instead, herein we suggest the market operator define a few canonical rates (e.g., 3 bps/day, 10 bps/day, and 50 bps/day) to segment market participants based on their implied maturities and risk tolerances (see, e.g., Figure~\ref{fig:maturity}). A formal study of optimal choice of amortization rates to maximize, e.g., total market volume when underwriters and option holders compete in an equilibrium setting goes beyond the scope of this work.

\subsection{Applications}\label{sec:discussion-apps}
Thus far, we have considered the construction and valuation of AmPOs. Within this section, we will briefly consider use cases for these novel perpetual options. In doing so, we will consider applications both within traditional financial markets and decentralized, i.e., blockchain-based, markets. In line with the discussion in Section~\ref{sec:discussion-amort} above, we restrict our discussion herein to the constant amortization rate setting of Section~\ref{sec:bs-const}.

\subsubsection{Traditional Finance}\label{sec:discussion-tradfi}
Consider that installment options have been traded in over-the-counter markets already~\cite{kimura2009american}. However, as noted in Section~\ref{sec:ampo}, such options are non-fungible because of the specific requirements of installment payments and lapsing logic. In contrast, AmPOs provide an option type that can be transacted on a central limit order book. This exchange-based tradability permits more robust price-discovery and the possibility of more frequent transactions (due to, e.g., the ability to close long and short positions easily with market operations). 

Notably, due to the deterministic cost-of-carry of AmPOs, large institutional investors (e.g., pension plans) can hold a large perpetual protective position while only needing to pay $q V_t dt$ (or a discretized variant) per unit of notional over time to maintain the desired notional exposure. This is in contrast to dated options which require rolls around the expiry date. Such rolling operations become high-risk events due to the timing involved as, e.g., the Greeks of at-the-money options can explode near expiry.

\subsubsection{Decentralized Finance}\label{sec:discussion-defi}
While exchange-tradability is beneficial in traditional markets, it is essential for use in decentralized finance (DeFi) due to the nature of blockchain technologies. Though non-fungible tokens exist in DeFi, they are generally traded in illiquid markets, such as \emph{OpenSea}, because every contract would need to find a specific counterparty. In comparison, fungible tokens trade in liquid limit order books or automated market makers.

Automated market makers function by holding reserves of two or more assets against which anyone can transact~\cite{bichuch2022axioms}. These asset reserves are supplied by liquidity providers in exchange for the fees earned on every transaction. However, though the liquidity providers earn fees, their supplied assets are continuously exposed to trading by liquidity takers. Due to adverse selection, the effective position held by the liquidity providers (less the fees) is strictly worse than if replicated in an external market; this economic difference has been termed the loss-versus-rebalancing~\cite{milionis2022automated}. 
Notably,~\cite{singh2025modeling} proves that CI options can hedge this exposure; because AmPOs are a fungible variant of CI options, the same construction directly extends to AmPOs.

\bibliographystyle{apalike}
\bibliography{bibtex}

\begin{thebibliography}{}

\bibitem[Ackerer et~al., 2023]{ackerer2023perpetual}
Ackerer, D., Hugonnier, J., and Jermann, U. (2023).
\newblock Perpetual futures pricing.
\newblock {\em Mathematical Finance}.
\newblock Forthcoming.

\bibitem[Al-Hadad and Palmowski, 2024]{al2024perpetual}
Al-Hadad, J. and Palmowski, Z. (2024).
\newblock Perpetual {A}merican options with asset-dependent discounting.
\newblock {\em Applied Mathematics \& Optimization}, 89(1):18.

\bibitem[Angeris et~al., 2023]{angeris2023primer}
Angeris, G., Chitra, T., Evans, A., and Lorig, M. (2023).
\newblock A primer on perpetuals.
\newblock {\em SIAM Journal on Financial Mathematics}, 14(1):SC17--SC30.

\bibitem[Ben-Ameur et~al., 2005]{franccois2005pricing}
Ben-Ameur, H., Breton, M., and Fran{\c{c}}ois, P. (2005).
\newblock Pricing {ASX} installment warrants under {GARCH}.
\newblock {\em Les Cahiers du GERAD ISSN}, 711:2440.

\bibitem[Ben-Ameur et~al., 2006]{ben2006dynamic}
Ben-Ameur, H., Breton, M., and Fran{\c{c}}ois, P. (2006).
\newblock A dynamic programming approach to price installment options.
\newblock {\em European Journal of Operational Research}, 169(2):667--676.

\bibitem[Bichuch and Feinstein, 2025]{bichuch2022axioms}
Bichuch, M. and Feinstein, Z. (2025).
\newblock Axioms for automated market makers: {A} mathematical framework in
  fintech and decentralized finance.
\newblock {\em Operations Research}.
\newblock Forthcoming.

\bibitem[Boyarchenko and Levendorskii, 2002]{boyarchenko2002perpetual}
Boyarchenko, S.~I. and Levendorskii, S.~Z. (2002).
\newblock Perpetual {A}merican options under {L}{\'e}vy processes.
\newblock {\em SIAM Journal on Control and Optimization}, 40(6):1663--1696.

\bibitem[Boyarchenko and Levendorskiǐ, 2000]{boyarchenko2000option}
Boyarchenko, S.~I. and Levendorskiǐ, S.~Z. (2000).
\newblock Option pricing for truncated {L}{\'e}vy processes.
\newblock {\em International Journal of Theoretical and Applied Finance},
  3(03):549--552.

\bibitem[Ciurlia, 2011]{ciurlia2011integral}
Ciurlia, P. (2011).
\newblock An integral representation approach for valuing {A}merican-style
  installment options with continuous payment plan.
\newblock {\em Nonlinear Analysis: Theory, Methods \& Applications},
  74(16):5506--5524.

\bibitem[Ciurlia and Caperdoni, 2009]{ciurlia2009note}
Ciurlia, P. and Caperdoni, C. (2009).
\newblock A note on the pricing of perpetual continuous-installment options.
\newblock {\em Mathematical Methods in Economics and Finance}, 4(1):11--26.

\bibitem[Ciurlia and Roko, 2005]{ciurlia2005valuation}
Ciurlia, P. and Roko, I. (2005).
\newblock Valuation of {A}merican continuous-installment options.
\newblock {\em Computational Economics}, 25(1):143--165.

\bibitem[Davis et~al., 2001]{davis2001pricing}
Davis, M.~H., Schachermayer, W., and Tompkins, R.~G. (2001).
\newblock Pricing, no-arbitrage bounds and robust hedging of installment
  options.
\newblock {\em Quantitative Finance}, 1(6):597--610.

\bibitem[Davis et~al., 2002]{davis2002installment}
Davis, M.~H., Schachermayer, W., and Tompkins, R.~G. (2002).
\newblock Installment options and static hedging.
\newblock {\em The Journal of Risk Finance}, 3(2):46--52.

\bibitem[Detemple et~al., 2020]{detemple2020american}
Detemple, J., Abdou, S.~L., and Moraux, F. (2020).
\newblock American step options.
\newblock {\em European Journal of Operational Research}, 282(1):363--385.

\bibitem[Duffie and Singleton, 1999]{duffie1999modeling}
Duffie, D. and Singleton, K.~J. (1999).
\newblock Modeling term structures of defaultable bonds.
\newblock {\em The Review of Financial Studies}, 12(4):687--720.

\bibitem[Hull, 2018]{hull}
Hull, J.~C. (2018).
\newblock {\em Options, Futures, and Other Derivatives}.
\newblock Pearson, 10th edition.

\bibitem[Karatzas and Shreve, 1998]{karatzas1998methods}
Karatzas, I. and Shreve, S.~E. (1998).
\newblock {\em Methods of Mathematical Finance}, volume~39.
\newblock Springer.

\bibitem[Kimura, 2009]{kimura2009american}
Kimura, T. (2009).
\newblock American continuous-installment options: Valuation and premium
  decomposition.
\newblock {\em SIAM Journal on Applied Mathematics}, 70(3):803--824.

\bibitem[Kimura, 2010]{kimura2010valuing}
Kimura, T. (2010).
\newblock Valuing continuous-installment options.
\newblock {\em European Journal of Operational Research}, 201(1):222--230.

\bibitem[Lambert and Kristensen, 2023]{panoptic}
Lambert, G. and Kristensen, J. (2023).
\newblock Panoptic: {T}he perpetual, oracle-free options protocol.
\newblock {\em arXiv preprint arXiv:2204.14232}.

\bibitem[Linetsky, 1999]{linetsky1999step}
Linetsky, V. (1999).
\newblock Step options.
\newblock {\em Mathematical Finance}, 9(1):55--96.

\bibitem[Milionis et~al., 2024]{milionis2022automated}
Milionis, J., Moallemi, C.~C., Roughgarden, T., and Zhang, A.~L. (2024).
\newblock Automated market making and loss-versus-rebalancing.
\newblock {\em arXiv preprint arXiv:2208.06046}.

\bibitem[Quah and Strulovici, 2009]{quah2009comparative}
Quah, J. K.-H. and Strulovici, B. (2009).
\newblock Comparative statics, informativeness, and the interval dominance
  order.
\newblock {\em Econometrica}, 77(6):1949--1992.

\bibitem[Shiller, 1993]{shiller1993measuring}
Shiller, R.~J. (1993).
\newblock Measuring asset values for cash settlement in derivative markets:
  {H}edonic repeated measures indices and perpetual futures.
\newblock {\em The Journal of Finance}, 48(3):911--931.

\bibitem[Singh et~al., 2025]{singh2025modeling}
Singh, S.~F., Li, R. K.~X., Gaskin, S., Wu, Y., Klinck, J., Michalopoulos, P.,
  Poulos, Z., and Veneris, A. (2025).
\newblock Modeling loss-versus-rebalancing in automated market makers via
  continuous-installment options.
\newblock {\em arXiv preprint arXiv:2508.02971}.

\bibitem[White and Bankman-Fried, 2021]{everlasting}
White, D. and Bankman-Fried, S. (2021).
\newblock Everlasting options.
\newblock \url{https://www.paradigm.xyz/2021/05/everlasting-options}.

\bibitem[Yang and Yi, 2009]{yang2009variational}
Yang, Z. and Yi, F. (2009).
\newblock A variational inequality arising from {A}merican installment call
  options pricing.
\newblock {\em Journal of Mathematical Analysis and Applications},
  357(1):54--68.

\end{thebibliography}

\end{document}